
\documentclass[]{interact}

\usepackage{comment}

\theoremstyle{plain}
\newtheorem{theorem}{Theorem}[section]

\newtheorem{proposition}[theorem]{Proposition}

\newtheorem{assumption}[theorem]{Assumption}

\theoremstyle{definition}
\newtheorem{definition}[theorem]{Definition}

\theoremstyle{remark}
\newtheorem{remark}{Remark}

\newcommand{\ud}{\mathrm{d}}
\newcommand{\R}{\mathbb{R}}

\begin{document}


\title{Option Pricing in Illiquid Markets with Jumps}

\author{
\name{Jos\'{e} M. T. S. Cruz\textsuperscript{a}
and
Daniel \v{S}ev\v{c}ovi\v{c}\textsuperscript{b}\thanks{CONTACT J.~Cruz. Email: jmcruz@iseg.ulisboa.pt \newline to appear in: Applied Mathematical Finance }
}
\affil{\textsuperscript{a}
ISEG, University of Lisbon, Rua de Quelhas 6, 1200-781 Lisbon, Portugal; 
\textsuperscript{b}
Comenius University in Bratislava, Mlynsk\'a dolina, 84248 Bratislava, Slovakia}
}

\maketitle

\begin{abstract}
The classical linear Black--Scholes model for pricing derivative securities is a popular model in financial industry. It relies on several restrictive assumptions such as completeness, and frictionless of the market as well as the assumption on the underlying asset price dynamics following a geometric Brownian motion. The main purpose of this paper is to generalize the classical Black--Scholes model for pricing derivative securities by taking into account feedback effects due to an influence of a large trader on the underlying asset price dynamics exhibiting random jumps. The assumption that an investor can trade large amounts of assets without affecting the underlying asset  price itself is usually not satisfied, especially in illiquid markets. We generalize the Frey--Stremme nonlinear option pricing model for the case the underlying asset follows a L\'evy stochastic process with jumps. We derive and analyze a fully nonlinear parabolic partial-integro differential equation for the price of the option contract. We propose a semi-implicit numerical discretization scheme and perform various numerical experiments showing influence of a large trader and intensity of jumps on the option price. 
\end{abstract}

\begin{keywords}
Nonlinear partial integro-differential equation; L\'evy measure; Finite difference approximation  
\end{keywords}

\section{Introduction}
Over recent decades, the Black--Scholes model and its generalizations become widely used in financial markets because of its simplicity and existence of the analytic formula for pricing European style options. According to the classical theory developed by  Black,  Scholes and Merton the price  $V(t,S)$ of an option in a stylized financial market at time $t\in[0,T]$ and the underlying asset price $S$ can be computed as a solution to the linear Black--Scholes parabolic equation:
\begin{equation}
\label{eq6}
\frac{\partial V}{\partial t}(t,S) + \frac{1}{2}\sigma^2 S^2\frac{\partial^2 V}{\partial S^2}(t,S) + r S\frac{\partial V}{\partial S}(t,S) - r V(t,S) = 0,\mbox{ }t \in [0,T),S > 0.
\end{equation}
Here $\sigma>0$ is the historical volatility of the underlying asset driven by the geometric Brownian motion, $r > 0$ is the risk-free interest rate of zero-coupon bond. A solution is subject to the terminal pay-off condition $V(T,S) = \Phi(S)$ at maturity $t = T$. Evidence from stock markets observations indicates that this model is not the most realistic one, since it assumes that the market is liquid, complete, frictionless and without transaction costs.  We also recall that the linear Black--Scholes equation provides a solution corresponding to a perfectly replicated  portfolio which need not be a desirable property. In the last two decades some of these assumptions have been relaxed in order to model, for instance, the presence of transaction costs (see e.g.  Kwok \cite{NBS5} and Avellaneda and Paras \cite{NBS7}), feedback and illiquid market effects due to large traders choosing given stock-trading strategies (Sch\"onbucher and Willmott \cite{NBS13}, Frey and Patie \cite{NBS11}, Frey and Stremme \cite{NBS10}), risk from the unprotected portfolio (Janda\v{c}ka and \v{S}ev\v{c}ovi\v{c} \cite{NBS1}).  In all aforementioned generalizations of the linear Black--Scholes equation (\ref{eq6}) the constant volatility $\sigma$ is replaced by a nonlinear function $\tilde\sigma(S \partial _S^2 V)$ depending on the second derivative $\partial _S^2 V$ of the option price itself. In the class of generalized Black--Scholes equation with such a  nonlinear diffusion function, an important role is played by the nonlinear Black--Scholes model derived by Frey and Stremme in \cite{NBS1} (see also \cite{NBS11},\cite{Frey98}). In this model the asset dynamics takes into account the presence of feedback effects due to a large trader choosing his/her stock-trading strategy (see also \cite{NBS13}). The diffusion coefficient is again non-constant:
\begin{equation}
\tilde\sigma(S \partial^2_S V)^2 = \sigma^2 \left(1-\varrho S\partial^2_S V\right)^{-2},
\label{doplnky-c-frey}
\end{equation}
where $\sigma, \varrho>0$ are constants.

Another important direction in generalizing the original Black--Scholes equation arises from the fact that the sample paths of a Brownian motion are continuous, but the realized stock price of a typical company exhibits random jumps over the intraday scale, making the price trajectories discontinuous. In the classical Black--Scholes model the underlying asset price process is assumed to follow a  geometric Brownian motion. However the empirical distribution of stock returns exhibits fat tails. Several alternatives have been proposed in the literature for the generalization of this model. The models with jumps and diffusion can, at least in part, solve the problems inherent to the linear Black--Scholes model and they have also an important role in the options market. While in the Black--Scholes model the market is complete, implying that every pay-off can be perfectly replicated, in jump--diffusion models there is no perfect hedge and this way the options are not redundant. It turns out that the option price can be computed from the solution $V(t,S)$ of the following partial integro-differential (PIDE) Black--Scholes equation:
\begin{eqnarray}
&&\frac{\partial V}{\partial t}(t,S)+\frac{1}{2}\sigma^2 S^2 \frac{\partial^2 V}{\partial S^2 }(t,S) +r S\frac{\partial V}{\partial S}(t,S)-r V(t,S)\nonumber
\\
&&+\int_{\mathbb{R}} V(t,S+H(z,S))-V(t,S)-H(z,S)\frac{\partial V}{\partial S}(t,S)\nu(\ud z)=0,\label{nonlinearPIDE_one-intro}
\end{eqnarray}
where $H(z,S) = S (e^z-1)$ and $\nu$ is the so-called L\'evy measure characterizing the underlying asset process with random jumps in time and space. Note that, if  $\nu=0$ then (\ref{nonlinearPIDE_one-intro}) reduces to the classical linear Black--Scholes equation (\ref{eq6}). 

The novelty and main purpose of this paper is to take into account both directions of generalizations of the Black--Scholes equation. The assumption that an investor can trade large amounts of the underlying asset without affecting its price is no longer true, especially in illiquid markets. Therefore, we will derive, analyze, and perform numerical computation of the model. We relax the assumption of liquid market following the Frey--Stremme model under the assumption that the underlying asset price follows a L\'evy stochastic process with jumps. We will show that the corresponding PIDE nonlinear equation has the form: 
\begin{eqnarray}
&&\frac{\partial V}{\partial t}+\frac{1}{2}\frac{\sigma^2}{\left(1-\varrho S\partial_S \phi\right)^{2}} S^2 \frac{\partial^2 V}{\partial S^2 } +r S\frac{\partial V}{\partial S}-rV\nonumber
\\
&&+\int_{\mathbb{R}} V(t,S+H(t,z,S))-V(t,S)-H(t,z,S)\frac{\partial V}{\partial S}\nu(\ud z)=0,\label{nonlinearPIDE_one-intro-nonlin}
\end{eqnarray}
where the function $H(t,z,S)$ may depend e.g. on the large trader strategy function $\phi=\phi(t,S)$. This function  may depend on the delta $\partial_S V$ of the price $V$, if $\varrho>0$. 

The paper is organized as follows. In the next Section 2 we recall known facts regarding exponential L\'{e}vy models. We also recall important classes of L\'evy measures with finite and infinite activity. Section 3 is devoted to derivation of the novel option pricing model taking into account feedback effects of a large trader on the underlying asset following jump-diffusion L\'evy process. We show that the price of an option can be computed from a solution to a fully nonlinear partial integro-differential equation (PIDE) (\ref{nonlinearPIDE_one-intro-nonlin}). We also derive a formula for the trading strategy function $\phi$ which minimizes the variance of the tracking error.  Next, in Section 4 we present a semi-implicit numerical discretization scheme for solving the resulting nonlinear PIDE. The scheme is based on finite difference approximation. In Section 5 we present numerical results when considering the Variance--Gamma process. We also perform sensitivity numerical analysis of a solution with respect to model parameters $\rho$ and $\nu$. 

\section{Preliminaries, definitions and motivation}

We consider a stylized economy with two traded assets, a risky asset, usually a stock with a price $S_{t}$, and a risk-less asset, typically a bond with a price $B_{t}$ which is taken as numeraire. The bond market is assumed to be perfectly elastic as bonds are assumed to be more liquid when compared to stocks. In this economy there are two type of traders: the reference traders and the program traders. The program traders are also referred to as portfolio insurers since they use dynamic hedging strategies to hedge portfolio against jumps in stock prices. They are either single traders or a group of traders acting together. It is assumed that their trades influence the stock price equilibrium. The reference traders can be considered as representative traders of many small agents. We assume they act as price takers. Typically, it is assumed that $\tilde{D}(t,Y_{t},S_{t})$ is the reference trader demand function which depends on the income process $Y_t$ or some other fundamental state variable influencing the reference trader demand. The aggregate demand of program traders is denoted by $\varphi(t,S_{t})=\xi \phi(t,S_{t})$, where $\xi$ is the number of  written identical securities that the program traders are trying to hedge and $\phi(t,S_{t})$ is the demand per unit of the security being hedged. For simplicity, we assume that $\xi$ is the same for every program trader. The general case where different securities  are considered can be found e.g. in \cite{SirPapa98}. Assume the supply of a stock with the price  $\tilde{S_{0}}$ is constant. Let $D(t,Y,S)=\frac{\tilde{D}(t,Y,S)}{\tilde{S_{0}}}$ denote  the quantity demanded by a reference trader per unit of supply. Then the total demand relative to the supply at time $t$ is given by $G(t,Y,S)=D(t,Y,S)+\rho\phi(t,S)$, where $\rho=\frac{\xi}{\tilde{S_{0}}}$ and $\rho\phi(t,S)$ is the proportion of the total supply of the stock that is being traded by program traders. In order to obtain the market equilibrium the variables $Y$ and $S$ should satisfy $G(t,Y,S)=1$. Assume that the function $G$ is monotone  with respect to $Y$ and $S$ variables, and it is sufficiently smooth. Then we can solve the implicit equation $G(t,Y_t,S_t)=1$ to obtain $S_{t}=\psi(t,Y_{t})$ where $\psi$ is a sufficiently smooth function. Following  \cite{SirPapa98}, we assume that the stochastic process $Y_{t}$ has the following dynamics: 
\[
\ud Y_{t}=\mu(t,Y_{t})\ud t+\eta(t,Y_{t}) \ud W_{t}.
\]
Then, by using It\^{o}'s lemma for the process $S_{t}=\psi(t,Y_{t})$ we obtain
\begin{equation}
\ud S_t = \left( \partial_t \psi + \mu\partial_y\psi +\frac{\eta^2}{2}\partial^2_y\psi \right) \ud t + \eta \partial_y\psi  \ud W_t \equiv b(t,S_t) S_t \ud t + v(t,S_t) S_t \ud W_t.
\label{BS-v}
\end{equation}
It means that $S_t$ follows a geometric Brownian motion with a nonconstant volatility function $v(t,S)=\eta(t,Y)\partial_Y\psi(t,Y)/\psi(t,Y)$ where $Y=\psi^{-1}(t,S)$. 
Following the argument used in the derivation of the original Black--Scholes equation we obtain a generalization of the Black--Scholes partial differential equation with a nonconstant volatility function $\sigma=v(t,S)$.
In this paper we follow Frey and Stremme's approach  (cf. \cite{NBS11, Frey98}). The idea is to prescribe a dynamics for the underlying stock price instead of deriving it by using the market equlibrium and dynamics for the income process $Y_t$ as it is done e.g. in \cite{SirPapa98}. This way Frey and Stremme derived the same stock price dynamics as in \cite{SirPapa98} corresponding to a situation where the demand function is of logarithmic type, $D(Y,S)=\ln(\frac{Y^{\gamma}}{S})$, where $\gamma=\frac{\sigma}{\eta_0}$, and the income process $Y_t$ follows a geometric Brownian motion, i.e.
\begin{eqnarray}
&&\partial_Y D(Y,S)=\gamma\frac{1}{Y},\ \partial_S D (Y,S)=-\frac{1}{S},\ \  \ud Y_{t}=\mu_0 Y_{t}\ud t+\eta_0 Y_{t} \ud W_{t},
\label{generalPDE}
\\
&&v(t,S)=\eta(t,Y)\frac{\partial_Y\psi(t,Y)}{\psi(t,Y)} = - \frac{\eta_0 Y}{S}\frac{\gamma\frac{1}{Y}}{-\frac{1}{S}+\rho \frac{\partial \phi}{\partial S}}=\frac{\sigma}{1-\rho S\frac{\partial \phi}{\partial S}}.
\nonumber
\end{eqnarray}
Assuming the delta hedging strategy with $\phi(t,S)=\partial_S V(t,S)$ and inserting the volatility function $v(t,S)$ into (\ref{BS-v}) we obtain the generalized Black--Scholes equation with the nonlinear diffusion function of the form (\ref{doplnky-c-frey}).

Our main goal is to extend the Frey--Stremme  model to an underlying asset following a L\'{e}vy process. Next, we recall basic properties of L\'evy jump-diffusion processes.

\subsection{Exponential L\'{e}vy models}\label{sectiontwo}

Let $X_t, t\ge0,$ be a stochastic process. The measure $\nu(A)$ of a Borel set $A \in \mathcal{B}(\mathbb{R})$ defined  by $\nu(A)=\mathbb{E}\left[ J_{X}([0,1]\times A)\right]$ where $J_{X}([0,t]\times A)=\# \left\{s \in [0,t]:\Delta X_{s} \in A \right\}$ is the Poisson random measure. It gives the mean number, per unit of time, of jumps  whose amplitude belongs to the set $A$. Recall that the L\'{e}vy-It\^{o} decomposition provides a representation of $X_t$ which can be interpreted as a combination of a Brownian motion with a drift $\omega$ and an  infinite sum of independent compensated Poisson processes with variable jump sizes (see \cite{ConTan03}), i.e. 
\[
\ud X_{t}=\omega \ud t+\sigma \ud W_{t}+ \int_{\left|x\right|\geq 1}x J_{X}\left(\ud t,\ud x\right)+ \int_{\left|x\right|<1}x \widetilde{J}_{X}\left(\ud t,\ud x\right), 
\]
where $\widetilde{J}_{X}\left(\left[0,t\right]\times A\right) =J_{X}\left(\left[0,t\right]\times A\right)-t \nu\left(A\right)$ is the compensation of $J_{X}$.

Any L\'{e}vy process is a strong Markov process, the associated semigroup is a convolution semigroup. Its infinitesimal generator $L:u\mapsto L[u]$ is a nonlocal  partial integro-differential operator given by (see \cite{App04}):
\begin{eqnarray}
L[u](x)&=&\lim_{h \to 0^+} \frac{\mathbb{E}\left[u\left(x+X_{h}\right)\right]-u\left(x\right)}{h}
\nonumber \\
&=&\frac{\sigma^{2}}{2}\frac{\partial^{2}u}{\partial x^{2}}+\gamma\frac{\partial u}{\partial x}+\int_{\mathbb{R}}\left[u\left(x+y\right)-u\left(x\right)-y 1_{\left|y\right| \leq 1}\frac{\partial u}{\partial x}\left(x\right)\right] \nu(\ud y), 
\label{eq:infgen no condition}
\end{eqnarray}
which is well defined for any compactly supported function $u\in C^{2}_0\left(\mathbb{R}\right)$.

Let $S_{t}, t\ge 0,$ be a stochastic process representing an underlying asset process under a filtered probability space $\left(\Omega,\mathcal{F},\left\{\mathcal{F}_{t}\right\},\mathbb{P}\right)$. The filtration $\left\{\mathcal{F}_{t}\right\}$ represents the price history up to the time $t$. If the market is arbitrage-free then there is an equivalent  measure $\mathbb{Q}$ under which discounted prices of all traded financial assets are $\mathbb{Q}-$ martingales. This result is known as the fundamental theorem of asset pricing $($see \cite{ConTan03}$)$. The measure $\mathbb{Q}$ is also known as the risk neutral measure. We consider the exponential L\'{e}vy model in which the risk-neutral price process $S_{t}$ under $\mathbb{Q}$ is given by $S_{t}=e^{rt+X_{t}}$, where $X_{t}$ is a L\'{e}vy process under $\mathbb{Q}$ with the characteristic triplet $\left(\sigma,\gamma,\nu\right)$. Then the arbitrage-free market hypothesis imposes that $\widehat{S}_{t}=S_{t}e^{-rt}=e^{X_{t}}$ is a martingale, which is equivalent to the following conditions imposed on the triplet $\left(\sigma,\gamma,\nu\right)$:

\begin{equation}
\int_{\left|y\right|\ge 1} e^{y} \nu (\ud y)<\infty,\ \ \gamma\in\mathbb{R},\ \  \gamma=-\frac{\sigma^{2}}{2}-\int_{-\infty}^{+\infty} \left(e^{y}-1-y1_{\left|y\right|\leq 1}\right) \nu(\ud y). \label{expcond}
\end{equation}
The risk-neutral dynamics of $S_{t}$ under $\mathbb{Q}$ is given by
\begin{equation}
\ud S_{t}=r S_t \ud t+ \sigma S_t \ud W_{t}+ \int_{\mathbb{R}}\left(e^{y}-1\right)S_{t}\widetilde{J}_{X}\left(dt,dy\right).
\end{equation}
The exponential price process $e^{X_{t}}, t\ge 0,$ is also a Markov process with the state space $\left(0,\infty\right)$ and the infinitesimal generator:
\begin{eqnarray}
L^{S}[V](S)&=&\lim_{h \rightarrow 0} \frac{\mathbb{E}[V(S e^{X_{h}})]-V(S)}{h} = r S\frac{\partial V}{\partial S}+\frac{\sigma^{2}}{2}S^2\frac{\partial^{2}V}{\partial S^{2}}\\
&& +\int_{\mathbb{R}}\left[V(S e^{y})-V(S)-S (e^{y}-1)\frac{\partial V}{\partial S}\right]\nu(\ud y)\label{eq: infgenS }
\end{eqnarray}
(see \cite{ConTan03}). Recall that a  L\'{e}vy process is called the L\'{e}vy type stochastic integral if it has the following representation:
\[
\ud X_{t}=\omega \ud t+\sigma \ud W_{t}+\int_{|x|\ge 1} K(t,x) J_{X}(\ud t, \ud x)
+\int_{|x|<1} H(t,x) \tilde{J}_{X}(\ud t, \ud x).
\]
An important result that will be needed later is the following variant of It\^{o}'s lemma.

\begin{theorem}\cite{App04}
Let $f\in C^{1,2}([0,T]\times\mathbb{R})$ and  $H, K\in C([0,T]\times\mathbb{R})$. Let $X_{t}, t\ge 0,$ be a L\'{e}vy stochastic process. Then 
\begin{eqnarray}
\ud f(t,X_{t})&=&\frac{\partial f}{\partial t}\ud t+\frac{\partial f}{\partial x}\ud X_{t}+\frac{1}{2}\frac{\partial^2 f}{\partial x^2}\ud [X_{t},X_{t}]\nonumber
\\
&&+\int_{|x|\ge 1} f(t,X_{t}+K(t,x))-f(t,X_{t})J_{X}(\ud t, \ud x)\label{itolemma}
\\
&&+\int_{|x|<1} f(t,X_{t}+H(t,x))-f(t,X_{t})\tilde{J}_{X}(\ud t, \ud x)\nonumber
\\
&&+\int_{|x|<1} f(t,X_{t}+H(t,x))-f(t,X_{t})-H(t,x)\frac{\partial f}{\partial x}(t,X_t)\nu(\ud x)\ud t .
\nonumber
\end{eqnarray}
\end{theorem}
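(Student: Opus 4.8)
The plan is to deduce the formula from the classical It\^o formula for continuous semimartingales by a truncation-and-limiting argument, using the L\'evy--It\^o structure underlying the L\'evy-type stochastic integral representation of $X_t$. Write $X_t = X_t^c + J_t^{\mathrm{small}} + J_t^{\mathrm{big}}$, where $X_t^c = \omega t + \sigma W_t$ is the continuous part, $J_t^{\mathrm{big}} = \int_0^t\int_{|x|\ge1} K(s,x)\, J_X(\ud s,\ud x)$ collects the large jumps (of which there are a.s. only finitely many on $[0,T]$, since $\nu(\{|x|\ge1\})<\infty$), and $J_t^{\mathrm{small}} = \int_0^t\int_{|x|<1} H(s,x)\,\widetilde J_X(\ud s,\ud x)$ is the compensated small-jump martingale, which is well defined precisely because $\int_0^T\int_{|x|<1} |H(s,x)|^2\,\nu(\ud x)\,\ud s<\infty$. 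Since the derivatives $\partial_t f,\partial_x f,\partial_{xx} f$ need not be globally bounded, I would first localise: with $\tau_n = \inf\{t\ge0: |X_t|\ge n\}$, it suffices to establish the identity for $t\wedge\tau_n$ and then let $n\to\infty$, so from now on all derivatives of $f$ appearing below may be assumed bounded on the relevant compact set.

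\textbf{Truncation and the classical It\^o formula.} For $\varepsilon\in(0,1)$ let $X^\varepsilon$ be obtained by replacing $H(s,x)$ with $H(s,x)\mathbf{1}_{|x|\ge\varepsilon}$; equivalently $X^\varepsilon$ is a jump-diffusion whose continuous part has drift $\omega - \int_{\varepsilon\le|x|<1} H(s,x)\,\nu(\ud x)$ (the compensator of the truncated small jumps) and diffusion coefficient $\sigma$, and which has only finitely many jumps on $[0,T]$, of sizes $H$ (when $\varepsilon\le|\Delta X|<1$) and $K$ (when $|\Delta X|\ge1$). On each interval between consecutive jump times $t\mapsto f(t,X^\varepsilon_t)$ is a $C^{1,2}$ function of a continuous It\^o process, so the classical time-dependent It\^o formula applies; summing over these intervals and adding the jump increments $f(T_i,X^\varepsilon_{T_i})-f(T_i,X^\varepsilon_{T_i^-})$ at each jump time $T_i$ yields an exact identity for $f(t,X^\varepsilon_t)$. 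In that identity I would then (i) rewrite the sum over $\varepsilon$-small jumps as the compensated integral $\int_0^t\int_{\varepsilon\le|x|<1}[f(s,X^\varepsilon_{s-}+H)-f(s,X^\varepsilon_{s-})]\,\widetilde J_X(\ud s,\ud x)$ plus its compensator $\int_0^t\int_{\varepsilon\le|x|<1}[f(s,X^\varepsilon_{s-}+H)-f(s,X^\varepsilon_{s-})]\,\nu(\ud x)\,\ud s$, and (ii) absorb into this compensator the drift contribution $-\int_0^t\int_{\varepsilon\le|x|<1} H\,\partial_x f(s,X^\varepsilon_{s-})\,\nu(\ud x)\,\ud s$ that comes from the continuous part; this produces exactly the term $\int_0^t\int_{\varepsilon\le|x|<1}[f(s,X^\varepsilon_{s-}+H)-f(s,X^\varepsilon_{s-})-H\,\partial_x f(s,X^\varepsilon_{s-})]\,\nu(\ud x)\,\ud s$ of the claimed formula, while the remaining terms $\partial_t f\,\ud t$, $\partial_x f\,\ud X^c$, $\frac{1}{2}\partial_{xx}f\,\sigma^2\,\ud t$, and the large-jump integral against $J_X$ are already in the desired shape.

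\textbf{Passing to the limit $\varepsilon\to0^+$.} Using $L^2$-estimates for compensated Poisson integrals one has $X^\varepsilon\to X$ uniformly on $[0,T]$ in probability; the continuous-martingale and drift terms and the large-jump integral then pass to the limit by continuity of $f$ and its derivatives together with the finiteness of the number of large jumps. The compensated small-jump integral converges in $L^2$ to $\int_0^t\int_{|x|<1}[f(s,X_{s-}+H)-f(s,X_{s-})]\,\widetilde J_X(\ud s,\ud x)$ by the It\^o isometry, since its integrand is dominated by $\|\partial_x f\|_\infty\,|H(s,x)|$ and $H$ is square-integrable against $\nu(\ud x)\,\ud s$. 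Finally, the residual $\nu$-integral term converges by dominated convergence once one invokes the second-order Taylor estimate $|f(s,y+h)-f(s,y)-h\,\partial_x f(s,y)|\le \frac{1}{2}\|\partial_{xx} f\|_\infty\, h^2$, which bounds its integrand by the $\nu(\ud x)\,\ud s$-integrable majorant $\frac{1}{2}\|\partial_{xx} f\|_\infty\,|H(s,x)|^2\mathbf{1}_{|x|<1}$. Undoing the localisation by letting $n\to\infty$ then gives the identity in general; note that $X_{s-}=X_s$ for all but countably many $s$, so the integrands may be written with $X_s$ as in the statement.

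\textbf{Main obstacle.} The delicate point is the infinite-activity regime of the small jumps: the uncompensated sum $\sum_{s\le t}[f(s,X_s)-f(s,X_{s-})]$ need not converge, so the compensation has to be carried through the entire argument, and one must verify that the leftover compensator integral is both finite and stable as $\varepsilon\to0$. What makes this work is the Taylor cancellation of the first-order term, which replaces the generally non-integrable increment $f(\cdot,X_{s-}+H)-f(\cdot,X_{s-})$ by the $O(H^2)$ quantity $f(\cdot,X_{s-}+H)-f(\cdot,X_{s-})-H\,\partial_x f(\cdot,X_{s-})$; combined with $\int_0^T\int_{|x|<1}|H(s,x)|^2\,\nu(\ud x)\,\ud s<\infty$, this is exactly the hypothesis structure ($f\in C^{1,2}$ with locally bounded $\partial_{xx}f$, plus $L^2$-integrability of $H$ near the origin) that renders the last integral in the statement well defined and the limit legitimate. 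The remaining bookkeeping --- matching the compensator drift of the continuous part with the $\nu$-integral of the small jumps, and the routine $L^2$ and uniform-in-probability convergence estimates --- is standard.
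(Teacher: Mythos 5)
The paper does not prove this theorem: it is quoted from Applebaum \cite{App04} as a known result, so there is no in-paper argument to compare yours against. Your sketch is essentially the standard proof found there (and in Cont--Tankov): split $X$ into continuous part, finitely many large jumps, and compensated small jumps; truncate the small jumps at level $\varepsilon$; apply the classical It\^o formula between the finitely many jump times of $X^\varepsilon$ and add the jump increments; regroup the truncated small-jump sum into a compensated Poisson integral plus its $\nu$-compensator, cancelling the first-order term against the drift correction; and let $\varepsilon\to0$ using the It\^o isometry for the martingale part and the second-order Taylor bound $O(|H|^2)$, integrable against $\nu(\ud x)\,\ud s$, for the compensator. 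This is sound, and you have correctly isolated the one genuinely delicate point, namely the Taylor cancellation that makes the infinite-activity compensator finite. Two remarks. First, what your argument actually yields --- and what \cite{App04} asserts --- has $\frac{\partial f}{\partial x}\,\ud X^c_t$ and $\frac12\frac{\partial^2 f}{\partial x^2}\,\ud[X^c,X^c]_t$ with the \emph{continuous} part $X^c$; read literally, the displayed formula with the full differentials $\ud X_t$ and $\ud[X_t,X_t]$ alongside the uncorrected large-jump term $f(t,X_t+K)-f(t,X_t)$ would double-count the first-order jump contributions. Your proof therefore establishes the correct version of the theorem, and the discrepancy lies in the paper's transcription rather than in your argument. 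Second, a minor technical gloss: after localizing at $\tau_n$ the left limits $X_{s-}$ are bounded, but the points $X_{s-}+H(s,x)$ at which $f$ and its derivatives are evaluated need not be, so the suprema $\|\partial_x f\|_\infty$ and $\|\partial_{xx} f\|_\infty$ in your estimates must be taken over a correspondingly enlarged region (or handled by a further localization); this is routine bookkeeping and does not affect the validity of the argument.
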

A classical example of a L\'{e}vy process is the jump-diffusion model introduced by Merton in \cite{Merton76}. It has the following dynamics:
\begin{eqnarray}
&& \ud X_{t}=\left(b+\int_{|x|<1} x \nu(\ud x)\right)\ud t+\int_{|x|\ge 1} x J_{X}(\ud t, \ud x) +\sigma \ud W_{t}+\int_{|x|<1} x \tilde{J}_{X}(\ud t, \ud x).
\nonumber
\end{eqnarray}
Then, by applying It\^{o}'s lemma to $S_{t}=e^{X_{t}}$ we obtain:
\begin{eqnarray}
&&\ud S_{t}=(b+\sigma^{2}/2)S_{t} \ud t+\sigma S_{t}\ud W_{t}+S_{t}\int_{\mathbb{R}} \left(e^{x}-1\right) J_{X}(\ud t, \ud x) .
\nonumber
\end{eqnarray}
In financial applications, exponential L\'{e}vy models are of several types. In this paper we are concerned with the so-called jump-diffusion models in which we represent the log-price as a L\'{e}vy process with a non-zero diffusion part $(\sigma >0)$ and a jump process with either finite activity with ($\nu(\mathbb{R})<\infty$) or infinite activity ($\nu(\mathbb{R})=\infty$). 

In the context of financial modelling a jump-diffusion model was proposed by Merton in \cite{Merton76}. The random jump variables are normally distributed with the mean $m$ and variance $\delta^2$. Its L\'{e}vy density is given by:
\begin{equation}
\nu(\ud x)=\lambda \frac{1}{\delta\sqrt{2\pi}}e^{-\frac{(x-m)^2}{2\delta^2}}\ud x\,.
\label{merton-density}
\end{equation}
Another popular and frequently used model is the so-called double exponential model which was introduced by Kou in \cite{Kou2002}. In this model the distribution of jumps  have a L\'evy measure of the form:
\begin{equation}
\nu (\ud x)=\lambda \left( \theta \lambda^{+} e^{-\lambda^{+} x}1_{x>0}+ (1-\theta) \lambda^{-} e^{\lambda^{-} x}1_{x<0}\right)\ud x,
\label{double-density}
\end{equation}
where $\lambda$ is the intensity of jumps, $\theta$ is the probability of having a positive jump and  $\lambda^\pm> 0$ correspond to the level of decay of the distribution of  positive and negative jumps. This implies that the distribution of jumps is asymmetric and the tails of the distribution of returns are semi-heavy. 

Among examples of infinite activity L\'evy processes used in the financial modelling there are e.g. the Variance Gamma (see \cite{DPE98}), Normal Inverse Gaussian (NIG) (see \cite{BARNIE01}), or CGMY processes. The Variance Gamma process is a process of infinite activity $\nu(\mathbb{R})=\infty$ and finite variation $\int_{|x|\leq 1}|x|\nu(\ud x)<\infty$. Its L\'{e}vy measure is given by:
\begin{equation}
\nu (\ud x)=\frac{1}{\kappa \left|x\right|}e^{Ax-B\left|x\right|}\ud x \ \text{ with }\  A=\theta/\sigma^{2} \ \text{ and }\  B=\sigma^{-2}\sqrt{\theta^2+2\sigma^2/\kappa},
\label{vargamma-density}
\end{equation} 
where the parameters $\sigma$ and $\theta$ are related to volatility and drift of the Brownian motion and $\kappa$ is a parameter related to the variance of the subordinator, in this case the Gamma process (see \cite{ConTan03}). 

All aforementioned examples of jump-diffusion models have L\'evy measures belonging to the class of the so-called admissible activity L\'evy measures. 

\begin{definition}\label{def-admissiblemeasure}
A L\'evy measure $\nu$ is called an admissible activity L\'evy measure if  
\begin{equation}
0\le \frac{\nu(\ud z)}{\ud z} 
\le 
h(z)\equiv C |z|^{-\alpha}\left(e^{D^{-}z}1_{z\geq 0}+e^{D^{+}z}1_{z< 0}\right)e^{-\mu z^{2}},
\label{growth_measure}
\end{equation}
for any $z\in\R$ and shape parameters $\alpha\geq 0$, $D^{\pm}\in\mathbb{R}$ and $\mu\geq 0.$
\end{definition}

\begin{remark}
Note that the additional conditions $\int_{\mathbb{R}} \min(z^2,1) \nu (\ud z)<\infty$  and 
$\int_{\left|z\right|>1} e^{z} \nu (\ud z)<\infty$ are satisfied provided that $\nu$ is an admissible L\'evy measure with shape parameters $\alpha<3$, and, either $\mu>0, D^\pm\in \mathbb{R}$, or $\mu=0$ and $D^-+1<0<D^+$.

For the Merton model we have $\alpha=0, D^\pm=0$ and $\mu=1/(2\delta^2)>0$. In the Kou model $\alpha=\mu=0, D^+=\lambda^-, D^-=-\lambda^+$. As for the Variance Gamma process we have $\alpha=1, \mu=0, D^\pm=A\pm B$.
\end{remark}

\section{Feedback effects under jump-diffusion underlying asset price dynamics}

Let us suppose that a large trader uses a stock-holding strategy $\alpha_{t}$ and $S_{t}$ is a cadlag process (right continuous  with limits to the left). Henceforth,  we shall identify $S_{t}$ with $S_{t^-}$. We assume $S_t$ has the following  dynamics:
\begin{eqnarray}
&&\ud S_{t}=\mu S_{t}\ud t+\sigma S_{t} \ud W_{t}+\rho S_{t} \ud \alpha_{t}+ \int_{\mathbb{R}} S_{t}(e^{x}-1) J_{X}(\ud t, \ud x).
\label{Sdynamicsimplicit}
\end{eqnarray}
It can be viewed as a perturbation of the classical jump-diffusion model. Indeed, if a large trader does not trade then $\alpha_{t}=0$ or the market liquidity parameter $\rho$ is set to zero then the stock price $S_t$ follows the classical jump-diffusion model.

In what follows, we will assume the following structural hypothesis:
\begin{assumption}\label{assumptionone}
Assume the trading strategy $\alpha_{t}=\phi(t,S_{t})$ and the parameter $\rho\ge 0$ satisfy $\rho L<1$, where $L=\sup_{S>0} |S\frac{\partial \phi}{\partial S}|$.
\end{assumption}

Next we show an explicit formula for the dynamics of $S_{t}$ satisfying (\ref{Sdynamicsimplicit}) under certain regularity assumptions made on the stock-holding function $\phi(t,S)$.
\begin{proposition}
Suppose that the stock-holding strategy $\alpha_{t}=\phi(t,S_{t})$  satisfies Assumption \ref{assumptionone} where $\phi\in C^{1,2}([0,T]\times \mathbb{R^{+}})$. If the process $S_{t}, t\ge 0,$ satisfies the implicit stochastic equation  (\ref{Sdynamicsimplicit}) then the process $S_{t}$ satisfies the following SDE:
\begin{eqnarray}
\ud S_{t}=b(t,S_{t})S_{t}\ud t+v(t,S_{t})S_{t} \ud W_{t}+\int_{\mathbb{R}} H(t,x,S_{t}) J_{X}(\ud t, \ud x)
\label{Sdynamicsexplicit},
\end{eqnarray}
where
\begin{eqnarray}
&&b(t,S)=\frac{1}{1-\rho S\frac{\partial \phi}{\partial S}(t,S)}\left(\mu+\rho\left(\frac{\partial \phi}{\partial t}+\frac{1}{2}v(t,S)^2S^2 \frac{\partial^2 \phi}{\partial S^2}\right)\right),
\label{coeffcondb}
\\
&&v(t,S)=\frac{\sigma}{1-\rho S\frac{\partial \phi}{\partial S}(t,S)},
\label{coeffcondv}
\\
&&H(t,x,S)=S_{}(e^{x}-1)+\rho S\left[\phi(t,S+H(t,x,S))-\phi(t,S)\right].
\label{coeffcondone}
\end{eqnarray}

\end{proposition}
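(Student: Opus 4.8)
The plan is to apply the L\'evy--It\^o variant of It\^o's lemma (the Theorem quoted from \cite{App04}) to the function $\phi(t,S_t)$ along the solution of the implicit equation~(\ref{Sdynamicsimplicit}), and then to substitute the resulting expression for $\ud\alpha_t=\ud\phi(t,S_t)$ back into~(\ref{Sdynamicsimplicit}) and solve algebraically for the $\ud t$, $\ud W_t$ and jump parts of $\ud S_t$. First I would write~(\ref{Sdynamicsimplicit}) in L\'evy-type-integral form, identifying the continuous local-martingale part, the drift, and the pure-jump part; in particular the quadratic variation of the continuous part of $S_t$ is $v(t,S_t)^2 S_t^2\,\ud t$ once we have closed the formula, so the argument is naturally a fixed-point/consistency computation. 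Applying It\^o's lemma to $\phi(t,S_t)$ gives
\[
\ud\phi(t,S_t)=\left(\frac{\partial\phi}{\partial t}+\frac{1}{2}\frac{\partial^2\phi}{\partial S^2}\,v(t,S_t)^2 S_t^2\right)\ud t+\frac{\partial\phi}{\partial S}\,\ud S_t^{c}+\left(\text{jump terms}\right),
\]
where $\ud S_t^{c}$ denotes the continuous part $b S_t\,\ud t+vS_t\,\ud W_t$ and the jump terms collect $\phi(t,S_t+H)-\phi(t,S_t)$ integrated against $J_X$. The key observation is that in the continuous part, $\frac{\partial\phi}{\partial S}\ud S_t^c$ contributes $S_t\frac{\partial\phi}{\partial S}$ times the continuous increment of $S_t$; inserting $\rho\,\ud\phi$ into~(\ref{Sdynamicsimplicit}) then produces the factor $\bigl(1-\rho S_t\frac{\partial\phi}{\partial S}\bigr)$ multiplying the unknown continuous increment of $S_t$, and dividing through (legitimate by Assumption~\ref{assumptionone}, since $\rho L<1$ guarantees $1-\rho S\partial_S\phi\neq 0$) yields~(\ref{coeffcondb}) and~(\ref{coeffcondv}).

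For the jump part I would match the coefficients of $J_X(\ud t,\ud x)$ on both sides: the left side has $H(t,x,S_t)$ by definition in~(\ref{Sdynamicsexplicit}), while the right side of~(\ref{Sdynamicsimplicit}) contributes $S_t(e^x-1)$ directly plus $\rho S_t$ times the jump of $\phi$, namely $\rho S_t[\phi(t,S_{t}+H(t,x,S_t))-\phi(t,S_t)]$ — here one uses that when $S$ jumps from $S_{t^-}$ to $S_{t^-}+H$, the value $\phi$ jumps correspondingly, and we have identified $S_t$ with $S_{t^-}$. This gives exactly the implicit relation~(\ref{coeffcondone}) for $H$. One should note that~(\ref{coeffcondone}) is itself an implicit equation for $H(t,x,S)$; it is solvable because the map $H\mapsto S(e^x-1)+\rho S[\phi(t,S+H)-\phi(t,S)]$ is a contraction in $H$ with Lipschitz constant $\rho\sup|S\partial_S\phi|\le\rho L<1$ under Assumption~\ref{assumptionone} (uniformly on $S>0$), so a unique $H$ exists by the Banach fixed-point theorem; this is worth remarking but is not strictly needed if one is content to characterize $H$ implicitly.

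The main obstacle, and the step requiring the most care, is bookkeeping the jump terms in It\^o's formula: the quoted Theorem splits jumps into $|x|\ge 1$ (driven by $J_X$) and $|x|<1$ (driven by the compensated $\tilde J_X$ plus a compensator correction), whereas~(\ref{Sdynamicsimplicit}) is written with a single uncompensated $J_X$ integral over all of $\mathbb{R}$. I would therefore first rewrite both the dynamics of $S_t$ and the It\^o expansion of $\phi(t,S_t)$ consistently — either by compensating the large-jump part or, more cleanly, by assuming (as the admissibility conditions and the Remark after Definition~\ref{def-admissiblemeasure} permit) enough integrability that $\int_{\mathbb{R}}(e^x-1)\,\nu(\ud x)<\infty$ so the jump integral can be taken against $J_X$ over all $x$ with a finite drift adjustment absorbed into $\mu$. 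Once the compensator terms on both sides are reconciled, the continuous-part matching and the jump-part matching decouple, and the three formulas~(\ref{coeffcondb})--(\ref{coeffcondone}) follow by collecting coefficients of $\ud t$, $\ud W_t$, and $J_X(\ud t,\ud x)$ respectively. Uniqueness of the decomposition of a semimartingale into its continuous martingale part, a drift, and a jump part justifies reading off the coefficients termwise.
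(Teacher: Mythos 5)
Your proposal is correct and follows essentially the same route as the paper: apply the L\'evy--It\^o formula to $\phi(t,S_t)$, substitute $\ud\alpha_t$ back into~(\ref{Sdynamicsimplicit}), divide by $\bigl(1-\rho S\frac{\partial\phi}{\partial S}\bigr)$ (nonzero by Assumption~\ref{assumptionone}), and read off~(\ref{coeffcondb})--(\ref{coeffcondone}) by matching the $\ud t$, $\ud W_t$ and $J_X$ coefficients. Your added observations --- the Banach fixed-point argument showing the implicit equation~(\ref{coeffcondone}) has a unique solution since $\rho L<1$, and the care about reconciling compensated versus uncompensated jump integrals --- go slightly beyond what the paper writes down and are welcome refinements rather than deviations.
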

\begin{proof}
We can rewrite the SDE  \eqref{Sdynamicsexplicit} for  $S_{t}$, in the following way:
\begin{eqnarray}
\ud S_{t}&=&\left(b(t,S_{t}) S_{t} +\int_{|x|<1} H(t,x,S_{t})\nu(\ud x)\right)\ud t + v(t,S_{t}) S_{t} \ud W_{t}\nonumber
\\
&&+\int_{|x|\ge 1} H(t,x,S_{t}) J_{X}(\ud t, \ud x)+\int_{|x|<1} H(t,x,S_{t}) \tilde{J}_{X}(\ud t, \ud x).
\nonumber
\end{eqnarray}
Since $\phi(t,S)$ is assumed to be a smooth function then, by applying It\^{o} formula \eqref{itolemma} to the process $\phi(t,S_{t})$, we obtain
\begin{eqnarray}
\ud \alpha_{t}&=&\left(\frac{\partial \phi}{\partial t}+\frac{1}{2}v(t,S_{t})^2S_{t}^2 \frac{\partial^2 \phi}{\partial S^2}\right)\ud t+\frac{\partial \phi}{\partial S}\ud S_{t}
\\
&&+\int_{\mathbb{R}} \phi(t,S_{t}+H(t,x,S_{t}))-\phi(t,S_{t})-H(t,x,S_{t}) \frac{\partial \phi}{\partial S}(t,S_t) J_{X}(\ud t, \ud x).
\nonumber
\end{eqnarray}
Now, inserting the differential  $\ud \alpha_{t}$ into \eqref{Sdynamicsimplicit}, we obtain
\begin{eqnarray}
\ud S_{t}&=&\mu S_{t}\ud t+\sigma S_{t} \ud W_{t}+\int_{\mathbb{R}} S_{t}(e^{x}-1) J_{X}(\ud t, \ud x)+\rho S_{t}\frac{\partial \phi}{\partial S}\ud S_{t}\nonumber
\\
&&+\rho S_{t} \left(\frac{\partial \phi}{\partial t}+\frac{1}{2}v(t,S_{t})^2S_{t}^2 \frac{\partial^2 \phi}{\partial S^2}\right)\ud t\label{Sdynamicsexplicitone}
\\
&&+\rho S_{t}\int_{\mathbb{R}} \phi(t,S_{t}+H(t,x,S_{t}))-\phi(t,S_{t})- H(t,x,S_{t}) \frac{\partial \phi}{\partial S}(t,S_t) J_{X}(\ud t, \ud x).
\nonumber
\end{eqnarray}
Rearranging terms in (\ref{Sdynamicsexplicitone}) we conclude 
\begin{eqnarray}
&&(1- \rho S_{t}\frac{\partial \phi}{\partial S}(t,S_{t}))\ud S_{t}=(\mu S_{t}+\rho S_{t}(\frac{\partial \phi}{\partial t}+\frac{1}{2}v(t,S_{t})^2 S_{t}^2 \frac{\partial^2 \phi}{\partial S^2}))\ud t 
\nonumber
\\
&&+\sigma S_{t} \ud W_{t}-\rho S_{t}\int_{\mathbb{R}} H(t,x,S_{t}) \frac{\partial \phi}{\partial S}(t, S_t)  J_{X}(\ud t, \ud x)
\label{Sdynamicsexplicittwo}
\\
&&+\int_{\mathbb{R}} S_{t}(e^{x}-1)+\rho S_{t}\left( \phi(t,S_{t}+H(t,x,S_{t}))-\phi(t,S_{t})\right) J_{X}(\ud t, \ud x).
\nonumber
\end{eqnarray}
Comparing terms in (\ref{Sdynamicsexplicit}) and  (\ref{Sdynamicsexplicittwo}) we end up with expressions \eqref{coeffcondb}, \eqref{coeffcondv}, and the implicit equation for the function $H$:
\begin{eqnarray}
H(t,x,S)&=&\frac{1}{1-\rho S\frac{\partial \phi}{\partial S}(t,S)}\left(S(e^{x}-1)+\rho S\left(\phi(t,S+H(t,x,S))-\phi(t,S)\right)\right)\nonumber
\\
&&-\frac{1}{1-\rho S\frac{\partial \phi}{\partial S}(t,S)}\rho S\frac{\partial \phi}{\partial S}(t,S) H(t,x,S).
\label{coeffcondH}
\end{eqnarray}
Simplifying this expression for $H$ we conclude  \eqref{coeffcondone}, as claimed.
\end{proof}

The function $H$ is given implicitly by equation \eqref{coeffcondone}. If we expand its solution $H$ in terms of a small parameter $\rho$, i.e. $H(t,x,S) = H^0(t,x,S) + \rho H^1(t,x,S) + O (\rho^2)$ as $\rho\to0$, we conclude the following proposition:

\begin{proposition}\label{prop-Happrox}
Assume $\rho$ is small. Then the first order approximation of the function $H(t,x,S)$ reads as follows:
\begin{eqnarray}
&&H(t,x,S)= S(e^{x}-1)+\rho S\left(\phi(t,Se^{x})-\phi(t,S)\right)+O (\rho^2)\quad\text{as}\ \rho\to0.
\label{Hsmall}
\end{eqnarray}
\end{proposition}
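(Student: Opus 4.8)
The plan is to substitute the asymptotic ansatz $H(t,x,S) = H^0(t,x,S) + \rho H^1(t,x,S) + O(\rho^2)$ into the implicit equation \eqref{coeffcondone} and equate powers of $\rho$. At order $\rho^0$, equation \eqref{coeffcondone} immediately gives $H^0(t,x,S) = S(e^x-1)$, so that $S + H^0(t,x,S) = Se^x$. This identifies the leading term as exactly the jump amplitude of the unperturbed exponential L\'evy process, which is the expected result when the large trader's influence vanishes.

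Next I would collect the $O(\rho)$ terms. Writing $\phi(t, S+H) = \phi\bigl(t, S + H^0 + \rho H^1 + O(\rho^2)\bigr)$ and Taylor-expanding $\phi$ in its second argument around $S + H^0 = Se^x$, we get $\phi(t,S+H) = \phi(t,Se^x) + \rho H^1 \,\partial_S\phi(t,Se^x) + O(\rho^2)$. Substituting into \eqref{coeffcondone}, the term $\rho S[\phi(t,S+H) - \phi(t,S)]$ contributes $\rho S\bigl(\phi(t,Se^x) - \phi(t,S)\bigr) + O(\rho^2)$ at first order, while the correction $\rho H^1 \partial_S\phi$ enters only at order $\rho^2$. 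Hence matching the $O(\rho)$ coefficients yields $H^1(t,x,S) = S\bigl(\phi(t,Se^x) - \phi(t,S)\bigr)$, which gives precisely \eqref{Hsmall}.

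To make this rigorous rather than merely formal, I would justify the solvability of \eqref{coeffcondone} and the validity of the expansion by a fixed-point / implicit-function argument. For fixed $(t,x,S)$, define the map $\mathcal{T}_\rho(H) = S(e^x-1) + \rho S[\phi(t,S+H) - \phi(t,S)]$ on a suitable closed ball in $\R$; using $\phi\in C^{1,2}$ and Assumption \ref{assumptionone} (so that $\partial_S\phi$ is bounded and in fact $\rho$ times its relevant Lipschitz constant is $<1$ for $\rho$ small), $\mathcal{T}_\rho$ is a contraction, giving a unique solution $H(t,x,S;\rho)$ depending smoothly on $\rho$. Differentiating the fixed-point identity with respect to $\rho$ at $\rho=0$ then produces $H^1$ as above and controls the remainder term $O(\rho^2)$ uniformly on compact sets of $(t,S)$ and on the relevant range of $x$.

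The main obstacle is not the algebra, which is routine, but the uniformity of the remainder estimate: one must ensure the $O(\rho^2)$ bound holds uniformly in the jump variable $x$, since $H$ and its corrections grow like $Se^x$ as $x\to+\infty$. This requires either restricting attention to the region where the L\'evy measure gives the dominant contribution — using the admissibility bound \eqref{growth_measure} which forces exponential (or Gaussian) decay of $\nu$ and hence controls the tails — or stating the approximation pointwise in $x$ with the understanding that it is subsequently integrated against $\nu(\ud x)$. I would therefore phrase the conclusion as a pointwise expansion in $x$ and defer the integrability bookkeeping to wherever \eqref{Hsmall} is used in deriving the PIDE.
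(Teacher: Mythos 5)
Your proposal is correct and follows essentially the same route as the paper, which simply substitutes the ansatz $H = H^0 + \rho H^1 + O(\rho^2)$ into the implicit equation \eqref{coeffcondone} and matches powers of $\rho$ to get $H^0 = S(e^x-1)$ and $H^1 = S(\phi(t,Se^x)-\phi(t,S))$. Your additional contraction-mapping justification and the remark about uniformity of the remainder in the jump variable $x$ go beyond what the paper records (it states the expansion only formally), but they do not change the underlying argument.
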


\begin{proposition}
Assume that the asset price process $S_{t}=e^{X_t +r t}$ fulfills SDE \eqref{Sdynamicsexplicit} where the L\'evy measure $\nu$ is such that $\int_{|x|\ge 1}  e^{2x} \nu\left(\ud x\right)<\infty$. Denote by $V(t,S)$ the price of a  derivative security given by 
\begin{equation}
V(t,S)=\mathbb{E}\left[e^{-r(T-t)}\Phi(S_{T})|S_{t}=S\right]=e^{-r(T-t)}\mathbb{E}\left[\Phi(Se^{r(T-t)+X_{T-t}})\right].
\end{equation}
Assume that the pay-off function $\Phi$ is a Lipschitz continuous function and the function $\phi$ has a bounded derivative. Then $V(t,S)$ is a solution to the PIDE:
\begin{eqnarray}
&&\frac{\partial V}{\partial t}+\frac{1}{2}v(t,S)^2 S^2\frac{\partial^2 V}{\partial S^2}+rS\frac{\partial V}{\partial S}-rV\nonumber
\\
&&\qquad +\int_{\mathbb{R}} V(t,S+H(t,x,S))-V(t,S)-H(t,x,S)\frac{\partial V}{\partial S}(t,S)\nu(\ud x)=0,
\label{nonlinearPIDE}
\end{eqnarray}
where $v(t,S)$ and $H(t,x,S)$ are given by $\eqref{coeffcondv}$ and $\eqref{coeffcondone}$, respectively.
\end{proposition}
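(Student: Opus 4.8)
The plan is to recognise (\ref{nonlinearPIDE}) as the Feynman--Kac equation associated with the jump--diffusion SDE (\ref{Sdynamicsexplicit}), obtained by expressing that the discounted option price is a $\mathbb{Q}$--martingale. Fix $(t,S)$, let $S_{s}$, $s\in[t,T]$, solve (\ref{Sdynamicsexplicit}) with $S_{t}=S$, and set $M_{s}:=e^{-rs}V(s,S_{s})$. The first step is to check that $(M_{s})_{s\in[t,T]}$ is a martingale. Since $b,v,H$ are deterministic functions of their arguments, $S_{s}$ is a Markov process, so $e^{-rs}V(s,S_{s})=\mathbb{E}\bigl[e^{-rT}\Phi(S_{T})\,|\,\mathcal{F}_{s}\bigr]$ by the defining formula for $V$, whence $\mathbb{E}[M_{u}\,|\,\mathcal{F}_{s}]=M_{s}$ for $t\le s\le u\le T$ by the tower property. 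The integrability needed to make this local martingale a true one follows from a uniform $L^{2}$ bound on $S_{T}$ (hence on $V(s,S_{s})$, using the Lipschitz growth of $\Phi$), which in turn rests on the boundedness of $\phi'$ and on the moment assumption $\int_{|x|\ge1}e^{2x}\,\nu(\ud x)<\infty$.

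The second step is to apply the It\^{o} formula for L\'{e}vy--type stochastic integrals (the jump--diffusion analogue of (\ref{itolemma}), valid for SDEs of the form (\ref{Sdynamicsexplicit})) to $M_{s}=e^{-rs}V(s,S_{s})$, using that the continuous quadratic variation of $S$ is $v(s,S_{s})^{2}S_{s}^{2}\,\ud s$ and that a jump $x$ of the driving process sends $S_{s^{-}}$ to $S_{s^{-}}+H(s,x,S_{s^{-}})$. Isolating the absolutely continuous, finite--variation part of $\ud M_{s}$ — everything else being a stochastic integral against $\ud W_{s}$ or against a compensated jump measure — and dividing by $e^{-rs}$, one obtains the drift
\[
-rV+\partial_{t}V+bS\,\partial_{S}V+\frac{1}{2} v^{2}S^{2}\partial_{S}^{2}V+\int_{\mathbb{R}}\bigl(V(\cdot,S{+}H)-V(\cdot,S)\bigr)\nu(\ud x),
\]
with coefficients evaluated at $(s,S_{s})$, resp. $(s,x,S_{s})$ (when the small jumps have infinite variation one keeps the usual truncation, compensating the $|x|<1$ part, and is led to the same outcome). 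Because $M_{s}$ is a martingale with absolutely continuous finite--variation part, this drift must vanish for a.e.\ $s$ along almost every trajectory; since $\sigma>0$ the law of $S_{s}$ has full support in $(0,\infty)$, so by continuity the drift vanishes for every $(t,S)$.

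The third step is purely algebraic. The no--arbitrage normalisation that the discounted underlying $e^{-rs}S_{s}$ is itself a $\mathbb{Q}$--martingale forces, on writing the drift of $e^{-rs}S_{s}$ with the help of (\ref{coeffcondb}), the relation $b(t,S)\,S=rS-\int_{\mathbb{R}}H(t,x,S)\,\nu(\ud x)$. Substituting it turns $bS\,\partial_{S}V$ into $rS\,\partial_{S}V-\partial_{S}V\int_{\mathbb{R}}H\,\nu(\ud x)$, and combining $-\partial_{S}V\int_{\mathbb{R}}H\,\nu(\ud x)$ with $\int_{\mathbb{R}}(V(\cdot,S{+}H)-V(\cdot,S))\nu(\ud x)$ into a single compensated integral yields exactly $\partial_{t}V+\frac{1}{2} v^{2}S^{2}\partial_{S}^{2}V+rS\,\partial_{S}V-rV+\int_{\mathbb{R}}\bigl(V(\cdot,S{+}H)-V(\cdot,S)-H\,\partial_{S}V\bigr)\nu(\ud x)=0$, which is (\ref{nonlinearPIDE}). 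The admissibility of $\nu$ (Definition \ref{def-admissiblemeasure}) together with the boundedness of $\phi'$ guarantees convergence of all the integrals involved and legitimacy of the rearrangements; note the final compensated integrand is $O(x^{2})$ near $x=0$, so it converges for every admissible $\nu$.

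The main obstacle is securing enough regularity of $V$ to apply It\^{o}'s formula, namely $V\in C^{1,2}([0,T)\times(0,\infty))$, together with the uniform integrability needed to upgrade local to genuine martingales and to differentiate under the expectation. Under the stated hypotheses I would obtain this either by differentiating the stochastic flow $S\mapsto S_{T}$ in $V(t,S)=e^{-r(T-t)}\mathbb{E}[\Phi(S_{T})\,|\,S_{t}=S]$ and invoking dominated convergence — Lipschitz $\Phi$ and bounded $\phi'$ give a Lipschitz (indeed smooth) dependence of $S_{T}$ on the initial datum — or by a mollification argument combined with interior Schauder--type estimates for the linear PIDE obtained by freezing the coefficients $v$ and $H$; in degenerate situations one would first characterise $V$ as the unique viscosity solution of (\ref{nonlinearPIDE}) and then bootstrap. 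Everything past that point is routine jump--diffusion calculus.
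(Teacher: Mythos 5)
Your proof is correct and follows essentially the same route as the paper's: apply It\^{o}'s formula to the discounted price $e^{-rt}V(t,S_t)$ under the risk-neutral dynamics, secure the martingale property of the stochastic-integral part via the Lipschitz continuity of $V$ (inherited from $\Phi$), the bounds $|H(t,x,S)|\le S|e^{x}-1|/(1-\rho L)$ and $0<v\le\sigma/(1-\rho L)$ coming from Assumption~\ref{assumptionone}, and the moment condition on $\nu$, and conclude that the absolutely continuous drift --- which is exactly the left-hand side of (\ref{nonlinearPIDE}) --- must vanish. The only cosmetic difference is that the paper posits the $\mathbb{Q}$-dynamics directly in compensated form (\ref{SQdynamics}), so the $rS\,\partial_S V$ term and the compensated integral appear immediately, whereas you recover them by substituting the drift identity $bS=rS-\int_{\mathbb{R}}H\,\nu(\ud x)$; your closing remarks on the $C^{1,2}$ regularity needed to apply It\^{o}'s formula address a point the paper leaves implicit.
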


\begin{proof}
 The asset price dynamics of $S_{t}$ under the $\mathbb{Q}$ measure is given by
\begin{eqnarray}
&&\ud S_{t}=rS_{t}\ud t + v(t,S_{t}) S_{t} \ud W_{t}+\int_{\mathbb{R}} H(t,x,S_{t}) \tilde{J}_{X}(\ud t, \ud x).\label{SQdynamics}
\end{eqnarray}
If we apply It\^{o}'s lemma to $V(t,S_{t})$ we obtain $\ud(V(t,S_{t})e^{-rt})=a(t)\ud t+\ud M_{t}$ where 
\begin{eqnarray*}
a(t)&=&\frac{\partial V}{\partial t}+\frac{1}{2}v(t,S_t)^2S_{t}^2 \frac{\partial^2 V}{\partial S^2}+rS_{t}\frac{\partial V}{\partial S}-rV\nonumber
\\
&&+\int_{\mathbb{R}} V(t,S_{t}+H(t,x,S_{t}))-V(t,S_{t})-H(t,x,S_{t})\frac{\partial V}{\partial S}(t,S_t)\nu(\ud x),
\\
\ud M_{t}&=&e^{-rt}S_{t}v(t,S_{t})\frac{\partial V}{\partial S}\ud W_{t}+e^{-rt}\int_{\mathbb{R}} V(t,S_{t}+H(t,x,S_{t}))-V(t,S_{t})\tilde{J}_{X}(\ud t, \ud x).
\end{eqnarray*}
Our goal is to show that $M_{t}$ is a martingale. Consequently, we have $a\equiv 0$ a.s., and  $V$ is a solution to \eqref{nonlinearPIDE} (see Proposition 8.9 of \cite{ConTan03}). To prove the term  $\int_{0}^{T}e^{-rt}\int_{\mathbb{R}} V(t,S_{t}+H(t,x,S_{t}))-V(t,S_{t})\tilde{J}_{S}(\ud t, \ud y)$ is a martingale it is sufficient to show that 
\begin{eqnarray}
&&\mathbb{E}\left[\int_{0}^{T}e^{-2rt} \left(\int_{\mathbb{R}}V(t,S_{t}+H(t,x,S_{t}))-V(t,S_{t})\nu(\ud x)\right)^2\ud t\right]<\infty.
\end{eqnarray}
Since $\sup_{0\le t\le T}\mathbb{E}\left[e^{X_{T-t}}\right] <\infty$ and the pay-off function $\Phi$ is Lipschitz continuous, $V(t,S)$ is Lipschitz continuous as well with some Lipschitz constant $C>0$. As the function $\phi(t,S)$ has bounded derivatives  we obtain
\[
S \left|\phi(t,S+H(t,x,S))-\phi(t,S)\right|
\leq S \left|\frac{\partial \phi}{\partial S}\right| |H(t,x,S)|
\leq L |H(t,x,S)|
\]
(see Assumption \ref{assumptionone}).  Since $H(t,x,S)=S (e^{x}-1)+\rho S (\phi(t,S+H(t,x,S))-\phi(t,S) )$ we obtain $|H(t,x,S)|^{2}\leq S^{2}(e^{x}-1)^2/(1-\rho L)^2$.  As $V$ is Lipschitz continuous with the Lipschitz constant $C>0$ we have
\begin{eqnarray*}
&&\mathbb{E}\left[\int_{0}^{T}e^{-2rt} \left(\int_{\mathbb{R}}V(t,S_{t}+H(t,x,S_{t}))-V(t,S_{t})\nu(\ud x)\right)^2\ud t\right]
\\
&&\le \frac{C^2 }{(1-\rho L)^2} \mathbb{E}\left[\int_{0}^{T}\int_{\mathbb{R}} e^{-2rt}  |S_t|^2  (e^{x}-1)^2 \nu(\ud x)  \ud t \right] <\infty,
\end{eqnarray*}
because $\sup_{t\in \left[0,T\right]} \mathbb{E}\left[S_{t}^2\right]<\infty$. Here $C_0 = \int_{\mathbb{R}} (e^{x}-1)^2 \nu(\ud x) <\infty$ due to the assumptions made on the measure $\nu$. It remains to prove that $\int_{0}^{T}e^{-rt}S_{t}v(t,S_{t})\frac{\partial V}{\partial S}(t,S_t)\ud W_{t}$ is a martingale. Since $S \frac{\partial \phi}{\partial S}(t,S)$ is assumed to be bounded we obtain
\begin{eqnarray}
0<v(t,S)=\frac{\sigma}{1-\rho S\frac{\partial \phi}{\partial S}(t,S)}\leq \frac{\sigma}{1-\rho L}\equiv C_{1}<\infty .\nonumber
\end{eqnarray}
Therefore $\mathbb{E}[\int_{0}^{T}e^{-2rt}(\frac{\partial V}{\partial S}(t,S_{t})v(t,S_{t}) S_{t})^{2}\ud t]
\leq C^{2} C_1^2 \int_{0}^{T}e^{-2rt}\mathbb{E}[S^{2}_{t}]\ud t<\infty$ because $S_t$ is a martingale. Hence $M_t$ is a martingale as well. As a consequence, $a\equiv 0$ and so $V$ is a solution to PIDE (\ref{nonlinearPIDE}), as claimed.
\end{proof}

\begin{remark}
If $\rho=0$ then $H(t,x,S)=S (e^{x}-1)$ and equation (\ref{nonlinearPIDE}) reduces to:
\begin{equation}
\frac{\partial V}{\partial t}+\frac{\sigma^2}{2}S_{}^2\frac{\partial^2 V}{\partial S^2}+rS_{}\frac{\partial V}{\partial S}-rV+\int_{\mathbb{R}} V(t,S e^{x})-V(t,S_{})-S (e^{x}-1)\frac{\partial V}{\partial S}(t,S) \nu(\ud x)=0,
\label{classicalPIDE}
\end{equation}
which is the well-known classical PIDE. If there are no jumps ($\nu=0$) and a trader follows the delta hedging strategy, i.e.  $\phi(t,S)=\partial_S V(t,S)$, then equation (\ref{nonlinearPIDE}) reduces to the Frey--Stremme option pricing model:
\begin{equation}
\frac{\partial V}{\partial t}+\frac{1}{2}\frac{\sigma^2}{\left(1-\varrho S\partial^2_S V\right)^{2}} S^2 \frac{\partial^2 V}{\partial S^2 } +r S\frac{\partial V}{\partial S}-rV =0
\label{Frey}
\end{equation}
(cf. \cite{Frey98}). Finally, if $\rho=0$ and $\nu=0$ equation (\ref{nonlinearPIDE}) reduces to the classical linear Black--Scholes equation. 
\end{remark}

For simplicity, we assume the interest rate is zero, $r=0$. Then the function $V(t,S)$ is a solution to the PIDE:
\begin{eqnarray}
\frac{\partial V}{\partial t}&+&\frac{1}{2}v(t,S)^2 S^2 \frac{\partial^2 V}{\partial S^2}
\nonumber 
\\
&+&\int_{\mathbb{R}} V(t,S+H(t,x,S))-V(t,S)-H(t,x,S)\frac{\partial V}{\partial S}(t,S)\nu(\ud x)=0.
\label{equationsimple}
\end{eqnarray}
Let us define the tracking error of a trading strategy $\alpha_t=\phi(t,S_t)$ as follows:
$e_{T}^{M}:=\Phi(S_T)-V_0=V(T, S_{T})-V_{0}-\int_{0}^{T} \alpha_{t} \ud S_{t}$.

By applying It\^{o}'s formula to $V(t,S_{t})$ and using $\eqref{equationsimple}$ we obtain
\begin{eqnarray}
&&V(T,S_{T})-V_0= V(T,S_{T})-V(0,S_{0})=\int_{0}^{T} \ud V(t,S_{t})
\nonumber
\\
&&=\int_{0}^{T} \frac{\partial V}{\partial S}\ud S_{t}+\int_{0}^{T} \frac{\partial V}{\partial t}+\frac{1}{2}v(t,S_{t})^2 S_{t}^2\frac{\partial^2 V}{\partial^2 S} \ud t\nonumber
\\
&&\qquad +\int_{0}^{T}\int_{\mathbb{R}} V(t,S_{t}+H(t,x,S_{t}))-V(t,S_{t})-H(t,x,S_{t})\frac{\partial V}{\partial S} J_{X}(\ud t, \ud x)\nonumber
\\
&&=\int_{0}^{T} \frac{\partial V}{\partial S}\ud S_{t} - \int_{0}^{T} \int_{\mathbb{R}} V(t,S_{t}+H(t,x,S_{t}))-V(t,S_{t})-H(t,x,S_{t})\frac{\partial V}{\partial S}\nu(\ud x)\ud t\nonumber
\\
&&\qquad +\int_{0}^{T}\int_{\mathbb{R}} V(t,S_{t}+H(t,x,S_{t}))-V(t,S_{t})-H(t,x,S_{t})\frac{\partial V}{\partial S} J_{X}(\ud t, \ud x)\nonumber
\\
&&=\int_{0}^{T} \frac{\partial V}{\partial S}\ud S_{t}+\int_{0}^{T}\int_{\mathbb{R}} V(t,S_{t}+H(t,x,S_{t}))-V(t,S_{t})-H(t,x,S_{t})\frac{\partial V}{\partial S} \tilde{J}_{X}(\ud t, \ud x).
\nonumber
\end{eqnarray}

Using expression (\ref{SQdynamics}) for the dynamics of the asset price $S_t$ (with $r=0$), the tracking error $e_{T}^{M}$ can be expressed as follows:
\begin{eqnarray}
e_{T}^{M} &=& V(T, S_{T})-V_{0}-\int_{0}^{T} \alpha_{t} \ud S_{t} = \int_{0}^{T} \left(\frac{\partial V}{\partial S}(t,S_t) - \alpha_t\right)\ud S_{t}
\nonumber\\
&& +\int_{0}^{T}\int_{\mathbb{R}} V(t,S_{t}+H(t,x,S_{t}))-V(t,S_{t})-H(t,x,S_{t})\frac{\partial V}{\partial S} \tilde{J}_{X}(\ud t, \ud x)
\nonumber
\\
&=& \int_{0}^{T}  v(t, S_t) S_t \left(\frac{\partial V}{\partial S}-\alpha_t\right)\ud W_{t}
\label{trackingerr}
\\
&& +\int_{0}^{T}\int_{\mathbb{R}} V(t,S_{t}+H(t,x,S_{t}))-V(t,S_{t})-\alpha_t H(t,x,S_{t}) \tilde{J}_{X}(\ud t, \ud x).
\nonumber
\end{eqnarray}

\begin{remark}
For the delta hedging strategy $\alpha_t=\phi(t,S_{t})=\frac{\partial V}{\partial S}(t,S_{t})$ the tracking error function $e_{T}^{M}$ can be expressed as follows:
\[
e_{T}^{M}=\int_{0}^{T}\int_{\mathbb{R}} V(t,S_{t}+H(t,x,S_{t}))-V(t,S_{t})-H(t,x,S_{t})\frac{\partial V}{\partial S}(t, S_t) \tilde{J}_{X}(\ud t, \ud x).
\]
Clearly, the tracking error for the delta hedging strategy need not be zero for $\nu\not\equiv0$. 
\end{remark}

Next, we propose a criterion that can be used to find the optimal hedging strategy.

\begin{proposition}\label{prop-trackingerror-min}
The trading strategy $\alpha_t=\phi(t,S_t)$ of a large trader minimizing the variance $\mathbb{E}\left[(\epsilon_{T}^M)^{2}\right]$ of the tracking error is given by the implicit equation:
\begin{eqnarray}
\phi(t,S_t)&=&\beta^\rho(t,S_t)\bigl[v(t,S_{t})^{2}S_{t}^{2}\frac{\partial V}{\partial S}(t,S_{t})
\nonumber
\\
&& +\int_{\mathbb{R}} \left(V(t,S_{t}+H(t,x,S_{t}))-V(t,S_{t})) H(t,x,S_{t}\right)\nu(\ud x)\bigr],
\label{strategy}
\end{eqnarray}
where $\beta^\rho(t,S_t)=1/[v(t,S_{t})^{2}S_{t}^{2}+\int_{\mathbb{R}}H(t,x,S_{t})^2\nu(\ud x) ]$ and $H(t,x,S)= S (e^x-1) + \rho S [\phi(t,S + H(t,x,S) )-\phi(t,S)]$.
\end{proposition}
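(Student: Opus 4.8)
\textit{Proof proposal.} The plan is to start from the representation \eqref{trackingerr} of the tracking error,
\begin{eqnarray*}
e_{T}^{M} &=& \int_{0}^{T} v(t,S_t) S_t \left(\frac{\partial V}{\partial S}(t,S_t)-\alpha_t\right)\ud W_{t}
\\
&& +\int_{0}^{T}\int_{\mathbb{R}} \left(V(t,S_{t}+H(t,x,S_{t}))-V(t,S_{t})-\alpha_t H(t,x,S_{t})\right) \tilde{J}_{X}(\ud t, \ud x),
\end{eqnarray*}
and to observe that, for an admissible strategy $\alpha_t=\phi(t,S_t)$ with $\phi$ having a bounded derivative, both stochastic integrals on the right-hand side are square-integrable martingales started at $0$; this is exactly the integrability already verified in the proof of the preceding (PIDE) proposition and may be invoked verbatim. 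Call them $M^{(1)}_t$ and $M^{(2)}_t$. Since $M^{(1)}$ is continuous and $M^{(2)}$ is a purely discontinuous (compensated Poisson) martingale, the quadratic covariation $[M^{(1)},M^{(2)}]$ vanishes; hence $M^{(1)}M^{(2)}$ is a martingale and $\mathbb{E}[M^{(1)}_T M^{(2)}_T]=0$. Therefore the Brownian and jump contributions to the variance decouple, and by the It\^{o} isometry applied to each,
\begin{eqnarray*}
\mathbb{E}\left[(e_{T}^{M})^{2}\right] &=& \mathbb{E}\left[\int_{0}^{T} v(t,S_t)^2 S_t^2 \left(\frac{\partial V}{\partial S}(t,S_t)-\alpha_t\right)^2 \ud t\right]
\\
&& +\ \mathbb{E}\left[\int_{0}^{T}\int_{\mathbb{R}} \left(V(t,S_{t}+H(t,x,S_{t}))-V(t,S_{t})-\alpha_t H(t,x,S_{t})\right)^2 \nu(\ud x)\,\ud t\right].
\end{eqnarray*}

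Next I would write $\mathbb{E}[(e_T^M)^2]=\mathbb{E}\bigl[\int_0^T F(t,S_t,\alpha_t)\,\ud t\bigr]$ with
\[
F(t,S,a)=v(t,S)^2S^2\left(\frac{\partial V}{\partial S}(t,S)-a\right)^2+\int_{\mathbb{R}}\left(V(t,S+H(t,x,S))-V(t,S)-aH(t,x,S)\right)^2\nu(\ud x),
\]
and minimise the integrand pointwise in $a$ for fixed $(t,S)$, with $H(t,\cdot,S)$ regarded as the value generated by the current strategy. Then $F(t,S,\cdot)$ is a strictly convex quadratic with leading coefficient $v(t,S)^2S^2+\int_{\mathbb{R}}H(t,x,S)^2\nu(\ud x)>0$, so its unique minimiser is characterised by $\partial_a F(t,S,a)=0$, i.e.
\[
v(t,S)^2S^2\left(\frac{\partial V}{\partial S}-a\right)+\int_{\mathbb{R}}\left(V(t,S+H)-V(t,S)-aH\right)H\,\nu(\ud x)=0 .
\]
Solving this linear equation for $a$ and multiplying by $\beta^\rho(t,S)=1/[v(t,S)^2S^2+\int_{\mathbb{R}}H(t,x,S)^2\nu(\ud x)]$ produces exactly \eqref{strategy}. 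Imposing consistency, namely that the optimal explicit value $a$ coincide with $\phi(t,S)$, and recalling that $H$ is tied to $\phi$ through $H(t,x,S)=S(e^x-1)+\rho S[\phi(t,S+H(t,x,S))-\phi(t,S)]$, turns \eqref{strategy} into the implicit (fixed-point) characterisation of $\phi$ asserted in the statement.

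The step I expect to be the main obstacle is this pointwise minimisation in the presence of the feedback term: because $H(t,x,S)$ depends on the whole function $\phi$, a bona fide calculus-of-variations derivative of $\mathbb{E}[(e_T^M)^2]$ with respect to $\phi$ would carry extra contributions from $\partial H/\partial \phi$. The way around this is to treat the two occurrences of $\alpha_t$ that enter \eqref{trackingerr} \emph{explicitly} as the optimisation variable while freezing $H$ at the value induced by the current $\phi$, and then to close the loop by the consistency requirement $a=\phi(t,S)$; this is precisely why the optimality condition comes out as the implicit equation \eqref{strategy} rather than an explicit formula. For $\rho=0$ the difficulty disappears ($H=S(e^x-1)$ is independent of $\phi$) and the argument delivers the minimiser outright; for $\rho>0$ one should additionally note, using Assumption \ref{assumptionone} and $\rho L<1$, that the map defining $H$ in \eqref{coeffcondone} is a contraction, so that the fixed-point formulation is well posed. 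A secondary point is to fix the admissible class of strategies carefully enough that $M^{(1)},M^{(2)}$ are true $L^2$-martingales and that $\alpha_t=\phi(t,S_t)$ remains admissible along the optimisation; again this is covered by the estimates already established above.
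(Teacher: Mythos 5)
Your proposal is correct and follows essentially the same route as the paper: an It\^{o}-isometry decomposition of $\mathbb{E}[(e_T^M)^2]$ into Brownian and jump contributions, followed by setting the derivative of the resulting convex quadratic in $\alpha_t$ to zero with $H$ frozen at the value induced by the current $\phi$, which yields the implicit equation \eqref{strategy}. Your added remarks on the vanishing cross-variation of the continuous and purely discontinuous martingales and on the contraction property ensuring the fixed-point for $H$ is well posed are sound refinements of what the paper states more tersely.
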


\begin{proof}
Using expression (\ref{trackingerr}) for the tracking error $\epsilon_{T}^M$ and It\^{o}'s isometry we obtain 
\begin{eqnarray*}
\mathbb{E}\left[(\epsilon_{T}^M)^{2}\right]
&=&\mathbb{E}\left[\int_{0}^{T}v(t,S_{t})^{2}S_{t}^{2}\left(\frac{\partial V}{\partial S}(t,S_t) - \alpha_{t}\right)^{2}\ud t\right]\nonumber
\\
&&+\mathbb{E}\left[\int_{0}^{T}\int_{\mathbb{R}}\left(V(t,S_{t}+H(t,x,S_{t}))-V(t,S_{t})-\alpha_{t}H(t,x,S_{t})\right)^{2}\nu(\ud x)\ud t\right].
\end{eqnarray*}
The minimizer $\alpha_{t}$ of the above convex quadratic minimization problem satisfies the first order necessary conditions 
$\ud (\mathbb{E}\left[\epsilon_{T}^{2}\right],\alpha_{t})=0$, that is, 
\begin{eqnarray}
0&=&-2\mathbb{E}\left[\int_{0}^{T}\left(v(t,S_{t})^{2}S_{t}^{2}\left(\frac{\partial V}{\partial S}(t,S_t) -\alpha_{t}\right)\right.\right.\nonumber
\\
&&\left.\left.+\int_{\mathbb{R}}H(t,x,S_{t})\bigl(V(t,S_{t}+H(t,x,S_{t}))-V(t,S_{t})-\alpha_{t}H(t,x,S_{t})\bigr)\nu(\ud x)\right)\omega_{t}\ud t\right]\nonumber
\end{eqnarray}
for any variation $\omega_{t}$. Thus the tracking error minimizing strategy $\alpha_{t}$ is given by \eqref{strategy}.
\end{proof}

\begin{remark}
The optimal trading strategy minimizing the variance of the tracking error need not satisfy the structural Assumption~\ref{assumptionone}. For instance, if $\nu=0$ then the tracking error minimizer is just the delta hedging strategy $\phi=\partial_S V$. In the case of a call or put option its gamma, i.e. $\partial^2_S V(t,S)$ becomes infinite as $t\to T$ and $S=K$. Given a level $L>0$ we can however minimize the tracking error $\mathbb{E}\left[\epsilon_{T}^{2}\right]$ under the additional constraint $\sup_{S>0} |S\frac{\partial \phi}{\partial S}(t,S)|\le L$. That is we can solve the following convex constrained nonlinear optimization problem
\[
\min_{\phi} \ \ \mathbb{E}\left[\epsilon_{T}^{2}\right] \ \ s.t. \ \ |S\partial_S \phi|\le L
\]
instead of the unconstrained minimization problem proposed in Proposition~\ref{prop-trackingerror-min}.
\end{remark}

\begin{remark}
Notice that, if $\nu=0$ and $\rho\ge 0$, the trading strategy $\alpha_t$ reduces to the Black--Scholes delta hedging strategy, i.e. $\alpha_{t}=\frac{\partial V}{\partial S}(t,S_t)$. If $\nu\not\equiv0$ and $\rho=0$, then the optimal trading strategy becomes $\alpha_t=\phi^{0}(t,S_t)$ where
\[
\phi^{0}(t,S_t)=\beta^0(t,S_t) \left(\sigma^{2}S_{t}^{2}\frac{\partial V}{\partial S}(t,S_t) + \int_{\mathbb{R}}S_{t}(e^{x}-1)\left(V(t,S_{t}e^{x})-V(t,S_{t})\right)\nu(\ud x)\right),
\]
where $\beta^0(t,S_t)=1/[\sigma^{2}S_{t}^{2}+\int_{\mathbb{R}}S^2_{t}(e^{x}-1)^2\nu(\ud x)]$. 
\end{remark}

We conclude this section by the following proposition providing the first order approximation of the tracking error minimizing trading strategy for the case when the parameter $\rho\ll 1$ is small. In what follows, we derive the first order approximation of $\phi^\rho(t,S_{t})$ in the form  $\phi^\rho(t,S_{t})=\phi^{0}(t,S_t) + \rho\phi^1(t,S_t)+O(\rho^{2})$ as $\rho\to 0$.

Clearly,  the first order Taylor expansion for the volatility function $v(t,S)$ has the form:
\[
v(t,S)^2 = \frac{\sigma^2}{(1-\rho S\partial_S \phi)^2} = \sigma^2 + 2 \rho  \sigma^2 S \frac{\partial\phi^0}{\partial S}(t,S) + O(\rho^{2}), \ \ \text{as}\ \rho\to 0.
\]
With regard to Proposition~\ref{prop-Happrox} (see \eqref{Hsmall}) we have  $H(t,x,S) = H^0(t,x,S) + \rho H^1(t,x,S) + O(\rho^{2})$, where
\begin{equation}
H^0(t,x,S) = S (e^x-1), \qquad H^1(t,x,S) = S [\phi^0(t,S e^x) - \phi^0(t,S)].
\label{Hexpand}
\end{equation}
The function $\beta^\rho$ can be expanded as follows: $\beta^\rho(t,S) = \beta^0(t,S) + \rho\beta^{(1)}(t,S) + O(\rho^{2})$,
\begin{eqnarray}
\beta^0(t,S) &=& 1/[\sigma^{2}S^{2}+ S^2 \int_{\mathbb{R}}(e^{x}-1)^2\nu(\ud x)], \quad 
\label{betaexpand}
\\
\beta^{(1)}(t,S) &=& - (\beta^0(t,S))^2 
\left[2 \sigma^{2}S^{3} \frac{\partial\phi^0}{\partial S}(t,S) + 2 S^2\int_{\mathbb{R}}(e^{x}-1) [\phi^0(t,S e^x) - \phi^0(t,S)]\nu(\ud x)\right].
\nonumber
\end{eqnarray}
Using the first order expansions of the functions $v^2, \beta^\rho$ and $H$ we obtain the following results.

\begin{proposition}
For small values of the parameter $\rho\ll 1$, the tracking error variance minimizing strategy $\alpha_{t}=\phi^\rho(t,S_{t})$ is given by 
\begin{eqnarray}
&&\phi^\rho(t,S_{t})=\phi^{0}(t,S_t) + \rho\phi^{(1)}(t,S_t)+O(\rho^{2}), \ \ \text{as}\ \rho\to 0, 
\end{eqnarray}
where 
\begin{eqnarray*}
\phi^{(1)}(t,S) &=&\beta^0(t,S) \left[ 2 \sigma^2 S^3 \frac{\partial V}{\partial S}(t,S) \frac{\partial \phi^0}{\partial S}(t,S)\right.
\\
&& + \left. \int_{\mathbb{R}}\left(V(t,S e^{x})-V(t,S) +  \frac{\partial V}{\partial S}(t,S e^x) H^0(t,x,S) \right) H^1(t,x,S) \nu(\ud x)
\right] 
\\
&& + \beta^{(1)}(t,S) \left[ \sigma^2 S^2 \frac{\partial V}{\partial S}(t,S)  
 +  \int_{\mathbb{R}}\left(V(t,S e^{x})-V(t,S)\right) H^0(t,x,S) \nu(\ud x)
\right]
\end{eqnarray*}
and the functions $H^0, H^1, \beta^0$ and $\beta^{(1)}$ are defined as in \eqref{Hexpand} and \eqref{betaexpand}.
\end{proposition}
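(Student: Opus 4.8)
The plan is to substitute the first-order expansions of $v^2$, $\beta^\rho$ and $H$ into the implicit formula \eqref{strategy} for the optimal strategy, collect the terms of order $\rho^0$ and $\rho^1$, and read off $\phi^{(1)}$ from the $\rho^1$ coefficient. First I would write the right-hand side of \eqref{strategy} as $\phi^\rho = \beta^\rho\,[\,A^\rho + B^\rho\,]$, where $A^\rho = v(t,S)^2 S^2 \partial_S V$ is the diffusion term and $B^\rho = \int_{\mathbb{R}} (V(t,S+H)-V(t,S))\,H\,\nu(\ud x)$ is the jump term, treating $S_t$ as a generic spatial variable $S$ since the identity holds pointwise. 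Because $\phi^\rho$ appears on both sides (inside $v^2$ via $\partial_S\phi$ and inside $H$ via $\phi(t,S+H)-\phi(t,S)$), I would use the self-consistency of the ansatz $\phi^\rho = \phi^0 + \rho\phi^{(1)} + O(\rho^2)$: at order $\rho^0$ every occurrence of $\phi$ inside the nonlinearities is $\phi^0$, and at order $\rho^1$ only the explicit prefactors $\beta^{(1)}$, the explicit $\rho$ multiplying $H^1$, and the explicit $\rho$ in the $v^2$ expansion contribute — the correction $\rho\phi^{(1)}$ sitting inside $H$ or $v^2$ would only affect order $\rho^2$.

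Next I would expand each piece. For the diffusion term, using $v(t,S)^2 = \sigma^2 + 2\rho\sigma^2 S \partial_S\phi^0 + O(\rho^2)$ gives $A^\rho = \sigma^2 S^2 \partial_S V + \rho\, 2\sigma^2 S^3 \partial_S V\, \partial_S\phi^0 + O(\rho^2)$. For the jump term, with $H = H^0 + \rho H^1 + O(\rho^2)$ and $H^0(t,x,S) = S(e^x-1)$, $H^1(t,x,S) = S[\phi^0(t,Se^x)-\phi^0(t,S)]$ from \eqref{Hexpand}, I expand the integrand to first order: $V(t,S+H) - V(t,S) = (V(t,Se^x) - V(t,S)) + \rho\, \partial_S V(t,Se^x)\, H^1 + O(\rho^2)$ (using $S + H^0 = Se^x$), and $H = H^0 + \rho H^1 + O(\rho^2)$, so the product $(V(t,S+H)-V(t,S))H$ equals $(V(t,Se^x)-V(t,S))H^0 + \rho\big[(V(t,Se^x)-V(t,S))H^1 + \partial_S V(t,Se^x) H^0 H^1\big] + O(\rho^2)$. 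Combining $A^\rho + B^\rho$ at order $\rho^0$ recovers the bracket defining $\phi^0$ (matching the last remark's formula, since $\beta^0[\sigma^2 S^2 \partial_S V + \int (V(t,Se^x)-V(t,S))H^0\nu]= \phi^0$), and at order $\rho^1$ gives the bracket multiplying $\beta^0$ in the claimed $\phi^{(1)}$. Finally, multiplying by $\beta^\rho = \beta^0 + \rho\beta^{(1)} + O(\rho^2)$ and using the product rule on the expansion, the $\rho^1$ coefficient is $\beta^0 \cdot (\text{order-}\rho^1\text{ bracket}) + \beta^{(1)} \cdot (\text{order-}\rho^0\text{ bracket})$, which is exactly the stated expression for $\phi^{(1)}$ with $\beta^{(1)}$ as in \eqref{betaexpand}.

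The routine but slightly delicate step is the Taylor expansion of the integrand inside the L\'evy integral: one must expand $V(t,S+H)$ around $S+H^0 = Se^x$ (not around $S$), keep careful track of which argument the derivative $\partial_S V$ is evaluated at, and be sure that differentiating the product picks up both the $H^1$ from the difference $V(t,S+H)-V(t,S)$ and the $H^1$ from the explicit factor $H$. The main obstacle, however, is justifying that these expansions may be carried out under the integral sign and that the $O(\rho^2)$ remainders are genuinely integrable against $\nu$: this requires the admissibility of $\nu$ (Definition~\ref{def-admissiblemeasure}), the moment bound $\int_{\mathbb{R}}(e^x-1)^2\nu(\ud x) < \infty$, Lipschitz continuity of $V$ and $\partial_S V$, and the uniform bound $\rho L < 1$ from Assumption~\ref{assumptionone} to control $|H|$ by $|H^0|/(1-\rho L)$ as in the proof of the previous proposition; I would invoke these to apply dominated convergence, so that the formal expansion is the genuine asymptotic one.
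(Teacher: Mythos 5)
Your proposal is correct and follows essentially the same route as the paper, which simply states that the result is obtained ``using the first order expansions of the functions $v^2$, $\beta^\rho$ and $H$'' and leaves the substitution into \eqref{strategy} and the collection of the $\rho^0$ and $\rho^1$ coefficients implicit. Your expansion of the integrand around $S+H^0=Se^x$, the product-rule bookkeeping yielding $\beta^0\cdot(\text{order-}\rho^1\text{ bracket})+\beta^{(1)}\cdot(\text{order-}\rho^0\text{ bracket})$, and the added justification for interchanging expansion and integration all match or responsibly supplement what the paper intends.
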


\section{Implicit-explicit numerical discretization scheme}

The aim of this section is to propose a full time-space discretization scheme for solving the nonlinear PIDE (\ref{nonlinearPIDE}). The method of discretization is based on a finite difference approximation of all derivatives occurring in (\ref{nonlinearPIDE}) and approximation of the integral term by means of the trapezoidal  integration rule  on a truncated domain. 

In order to solve \eqref{nonlinearPIDE} we transform it into a nonlinear parabolic PIDE defined on $\mathbb{R}$. Indeed, using the standard transformations $V(t,S)=e^{-r\tau}u(\tau,x),\phi(t,S)=\psi(\tau,x)$ where $\tau=T-t,x=\ln(\frac{S}{K})$ we conclude that $V(t,S)$ is a solution to \eqref{nonlinearPIDE} if and only if the function $u(\tau,x)$ solves the following nonlinear parabolic equation:

\begin{eqnarray}
\frac{\partial u}{\partial \tau}&=&\frac{1}{2}\frac{\sigma^2}{(1-\rho \frac{\partial\psi}{\partial x})^2}
\frac{\partial^2 u}{\partial^2 x}
+\left(r-\frac{1}{2}\frac{\sigma^2}{(1-\rho \frac{\partial\psi}{\partial x})^2}\right)\frac{\partial u}{\partial x}
\label{nonlinearPIDEaltsimplified}
\\
&+&\int_{\mathbb{R}} u(\tau,x+\xi(\tau,z,x))-u(\tau,x)-H(T-\tau,z,K e^{x})\frac{1}{K} e^{-x}\frac{\partial u}{\partial x}(\tau,x) \nu(\ud z), 
\nonumber
\\
&&u(0,x)=h(x) \equiv \Phi(K e^{x}), \quad (\tau,x)\in [0,T]\times \mathbb{R},
\end{eqnarray}
and
\begin{eqnarray}
&&H(t,z,S)=S(e^{z}-1)+\rho S [\phi(t, S+H(t,z,S))-\phi(t,S) ],
\label{Hfunction}
\\
&&\xi(\tau, z,x)=\ln(1+ K^{-1} e^{-x} H(T-\tau, z, K e^x)).
\label{xibig}
\label{boundaryPIDE}
\end{eqnarray}

\subsection{A numerical scheme for solving  nonlinear PIDEs with finite activity L\'evy measures}
We first consider the case when the  L\'evy measure $\nu$ has a finite activity, i.e. $\nu(\mathbb{R})<\infty$. Let us denote 
\[
\lambda=\int_{\mathbb{R}}\nu(\ud z),
\quad\text{and}\quad 
\omega(\tau, x)=\int_{\mathbb{R}}H(T-\tau,z,S_{0}e^{x})\frac{1}{S_{0}}e^{-x} \, \nu(\ud z).
\]
We have $\lambda<\infty$. Observe that   \eqref{nonlinearPIDEaltsimplified}  is equivalent to 
\begin{equation}
\frac{\partial u}{\partial \tau}=\frac{1}{2}\frac{\sigma^2}{(1-\rho \frac{\partial\psi}{\partial x})^2}
\frac{\partial^2 u}{\partial^2 x}
+\left(r-\frac{1}{2}\frac{\sigma^2}{(1-\rho \frac{\partial\psi}{\partial x})^2} -\omega \right)\frac{\partial u}{\partial x}
-\lambda u+\int_{\mathbb{R}} u(\tau,x+\xi(\tau,z,x))\nu(\ud z).
\label{nonlinearPIDEaltsimplifiedAlgo}
\end{equation}
We proceed to solve \eqref{nonlinearPIDEaltsimplifiedAlgo}  by means of the semi-implicit finite difference scheme proposed in \cite{Vol05}. The idea is to separate the right-hand side into two parts:  the differential part and the integral part.

Let $u_{i}^{j}= u(\tau_{j},x_{i}), \tau_{j}=j \Delta \tau, x_{i}=z_i=i\Delta x$ for $i=-N+1,\cdots,N-1$ and $j=1,\cdots ,M$. We approximate the differential part implicitly except of $\psi(\tau,x)$
\begin{eqnarray*}
&&\left(\frac{\partial u}{\partial x}\right)^{j}_{i}\approx \left\{
\begin{array}{rl}
\frac{u^{j+1}_{i+1}-u^{j+1}_{i}}{\Delta x}, & \text{if } \frac{(\sigma^j_i)^2}{2}-r+ \omega^j_i  <0,
\\
\frac{u^{j+1}_{i}-u^{j+1}_{i-1}}{\Delta x}, & \text{if } \frac{(\sigma^j_i)^2}{2}-r+ \omega^j_i \geq 0,
\end{array} \right.
\ \ \sigma^j_i = \frac{\sigma}{1-\rho D\psi_{i}^{j}}, 
\\
&&\left(\frac{\partial^{2}u}{\partial x^{2}}\right)^{j}_{i}\approx \frac{u^{j+1}_{i+1}-2u^{j+1}_{i}+u^{j+1}_{i-1}}{(\Delta x)^2},
\\
&&\left(\frac{\partial u}{\partial \tau}\right)^{j}_{i}\approx\frac{u_{i}^{j+1}-u_{i}^{j}}{\Delta t}.
\quad 
\left(\frac{\partial \psi}{\partial x}\right)^{j}_{i}\approx\frac{\psi_{i+1}^{j}-\psi_{i}^{j}}{\Delta x}=D\psi_{i}^{j}.
\end{eqnarray*}
As for the integral operator, first we have to truncate the integration domain to a bounded interval $[B_{l},B_{r}]$. We approximate this integral by choosing integers $K_{l}$ and $K_{r}$ such that $[B_{l},B_{r}]\subset [(K_{l}-1/2)\Delta x,(K_{l}+1/2)\Delta x].$  
Then
\begin{eqnarray}
&&\int_{B_{l}}^{B_{r}} u(\tau_{j},x_{i}+\xi(\tau_j, z_{i},x_{i})) \,\nu(\ud z)\approx \sum_{k=K_{l}}^{K_{r}} u(\tau_{j},x_{i}+\xi(\tau_j, z_{k},x_{i}))\nu_{k},
\end{eqnarray}
where $\nu_{k}=\frac{1}{2} \left(\nu(z_{k+1/2})+\nu(z_{k-1/2}) \right) \Delta x$. Analogously, 
\[
\omega^j_i \approx \frac{e^{-x_i}}{K} \sum_{k=K_{l}}^{K_{r}} H(T-\tau_{j},z_{k},K e^{x_{i}})\nu_{k}, \quad \text{and}\ \  \lambda \approx \sum_{k=K_{l}}^{K_{r}} \nu_{k},
\]
where $\xi(\tau, z,x)$ is given as in \eqref{xibig}.

Inserting the finite difference approximations of derivatives of $u$ into $\eqref{nonlinearPIDEaltsimplifiedAlgo}$ we obtain
\begin{eqnarray}
\frac{u_{i}^{j+1}-u_{i}^{j}}{\Delta t}&=&\frac{1}{2}(\sigma^j_i)^2
\frac{u^{j+1}_{i+1}-2u^{j+1}_{i}+u^{j+1}_{i-1}}{(\Delta x)^2}
-\lambda u_{i}^{j+1}
\label{PIDERBFIllSimplified_discone}
\\
&&+(r-\frac{1}{2}(\sigma^j_i)^2-\omega^j_i )\frac{u^{j+1}_{i+1}-u^{j+1}_{i}}{\Delta x} + \sum_{k=K_{l}}^{K_{r}} u(\tau_{j},x_{i}+\xi(\tau_j,  z_{k},x_{i}))\nu_{k},
\nonumber
\end{eqnarray}
provided that $\frac{1}{2} (\sigma^j_i)^2-r+\omega^j_i<0$. Similarly, we can derive a difference equation for the case $\frac{1}{2} (\sigma^j_i)^2-r+\omega^j_i\ge 0$. If we define coefficients $\beta^j_{i\pm}$, and $\beta^j_{i}$  as follows: 
\begin{eqnarray}
\beta^j_{i\pm}&=&-\frac{\Delta \tau}{2(\Delta x)^2}(\sigma^j_i)^2
-\frac{\Delta \tau}{\Delta x}\left(r-\frac{1}{2}(\sigma^j_i)^2-\omega^j_i\right)^\pm,
\label{betaplusminus}
\\
\beta^j_{i}&=&1+\Delta \tau \lambda - (\beta^j_{i-} + \beta^j_{i+}),
\label{beta0}
\end{eqnarray}
where $(a)^+=\max(a,0)$, $(a)^-=\min(a,0)$. Then the tridiagonal system of linear equations for the solution $u^j=(u^j_{-N+1},\cdots, u^j_{N-1})^T, \ j=0, \cdots,  M$, reads as follows:

\begin{eqnarray}
u_{i}^{0}&=&h(x_{i}), \ \text{for}\ i=-N+1,\cdots, N-1,\nonumber
\\
u_{i}^{j+1}&=&g(\tau_{j+1},x_{i}), \ \text{for}\  i=-N+1, \cdots, -N/2 -1,
\nonumber
\\
\beta^j_{i+}u_{i+1}^{j+1}+\beta^j_{i}u_{i}^{j+1}+\beta^j_{i-}u_{i-1}^{j+1}&=&u_{i}^{j}+\Delta\tau  \sum_{k=K_{l}}^{K_{r}} u(\tau_{j},x_{i}+\xi(\tau_j, z_{k},x_{i}))\nu_{k} 
\label{PIDERBFIllSimplified_discthree}, 
\\
&& \quad  \text{for}\  i=-N/2+1,\cdots,N/2-1,
\nonumber
\\
u_{i}^{j+1}&=&g(\tau_{j+1},x_{i}), \ \text{for}\ i=N/2, \cdots, N -1,
\nonumber
\end{eqnarray}
where 
\[
\xi(\tau_j, z_{k},x_{i})=\ln(1 + K^{-1} e^{-x_i} H(T-\tau_{j},z_{k}, K e^{x_{i}})),
\]
and $g$ is a function of points $x_i$ lying outside the localization interval. Following Proposition 4.3.1 in \cite{Vol05} the recommended choice is $g(\tau,x)=h(x+r\tau)=\Phi(K e^{r\tau + x} )$. The term $u(\tau_{j},x_{i}+\xi(\tau_j, z_{k},x_{i}))$ entering the sum in the right-hand side of \eqref{PIDERBFIllSimplified_discthree} is approximated by means of the first order Taylor series expansion:
\[
u(\tau_{j},x_{i}+\xi(\tau_j, z_{k},x_{i}))\approx u^j_i + \frac{u_{i+1}^{j}-u_{i}^{j}}{\Delta x} \xi(\tau_j, z_{k},x_{i}).
\]

\subsection{Numerical scheme for solving  nonlinear PIDEs with infinite activity L\'evy measures}

Next we consider the case when the L\'evy measure has an infinite activity, e.g. the Variance Gamma process where its L\'{e}vy density explodes at zero and $\nu(\mathbb{R})=\infty$. Equation  \eqref{nonlinearPIDEaltsimplified} is equivalent to 
\begin{eqnarray}
\frac{\partial u}{\partial \tau}&=&\frac{1}{2}\frac{\sigma^2}{(1-\rho \frac{\partial\psi}{\partial x})^2}
\frac{\partial^2 u}{\partial^2 x}
+\left(r-\frac{1}{2}\frac{\sigma^2}{(1-\rho \frac{\partial\psi}{\partial x})^2} -\omega \right)\frac{\partial u}{\partial x}
\nonumber
\\
&& +\int_{\mathbb{R}} u(\tau,x+\xi(\tau,z,x))-u(\tau,x)\nu(\ud z).
\label{nonlinearPIDEaltsimplifiedAlgoInfinite}
\end{eqnarray}
Equation \eqref{nonlinearPIDEaltsimplifiedAlgoInfinite} differs from \eqref{nonlinearPIDEaltsimplifiedAlgo} as the term $u(\tau,x)$ is contained in the integral part because $\lambda=\int_{\mathbb{R}} \nu(\ud z) = \infty$. Proceeding similarly as for discretization of \eqref{nonlinearPIDEaltsimplifiedAlgo} we can solve  \eqref{nonlinearPIDEaltsimplifiedAlgoInfinite} numerically by means of a semi-implicit finite difference scheme. If the coefficients $\beta^j_{i\pm}$ are defined as in \eqref{betaplusminus} and $\beta^j_{i}=1  - (\beta^j_{i-} + \beta^j_{i+})$, then the solution vector $u^j=(u^j_{-N+1},\cdots, u^j_{N-1})^T, \ j=0, \cdots, M$, is a solution to the following tridiagonal system of linear equations:

\begin{eqnarray}
u_{i}^{0}&=&h(x_{i}), \ \text{for}\ i=-N+1,\cdots, N-1,\nonumber
\\
u_{i}^{j+1}&=&g(\tau_{j+1},x_{i}), \ \text{for}\  i=-N+1, \cdots, -N/2 -1,
\label{PIDERBFIllSimplified_discthreeInfinity}
\\
\beta^j_{i+}u_{i+1}^{j+1}+\beta^j_{i}u_{i}^{j+1}+\beta^j_{i-}u_{i-1}^{j+1}&=&u_{i}^{j}+\Delta\tau  \sum_{k=K_{l}}^{K_{r}} \left( u(\tau_{j},x_{i}+\xi(\tau_j, z_{k},x_{i})) - u(\tau_j, x_i) \right)\nu_{k}, 
\nonumber
\\
&& \quad  \text{for}\  i=-N/2+1,\cdots,N/2-1,
\nonumber
\\
u_{i}^{j+1}&=&g(\tau_{j+1},x_{i}), \ \text{for}\ i=N/2, \cdots, N -1.
\nonumber
\end{eqnarray}
The term $u(\tau_{j},x_{i}+\xi(\tau_j, z_{k},x_{i})) - u(\tau_j, x_i)$ entering the sum in the right-hand side of \eqref{PIDERBFIllSimplified_discthreeInfinity} is again approximated by means of the first order Taylor series expansion, i.e. 
\[
u(\tau_{j},x_{i}+\xi(\tau_j, z_{k},x_{i})) - u(\tau_j, x_i) \approx \frac{u_{i+1}^{j}-u_{i}^{j}}{\Delta x} \xi(\tau_j, z_{k},x_{i}).
\]

\section{Numerical results}

In this section we present results of numerical experiments using the finite difference scheme described in Section 4 for the case of a European put option, i.e. $\Phi(S)=(K-S)^+$. As for the L\'evy process we consider the Variance Gamma process with parameters $\theta=-0.33,\sigma=0.12,\kappa=0.16$, and other option pricing model parameters: $r=0, K=100, T=1$. Numerical discretization parameters were chosen as follows: $\Delta x=0.01,\Delta t=0.005$. Since the Variance Gamma process has infinite activity, we employ numerical discretization scheme described in Section 4.2. In what follows, we present various option prices computed by means of the finite-difference numerical scheme described in Section 4 for the linear Black--Scholes $(\rho=0)$ and the Frey--Stremme model $(\rho>0)$ and their jump-diffusion PIDE generalizations.

\begin{figure}
\centering
\includegraphics[width=7truecm]{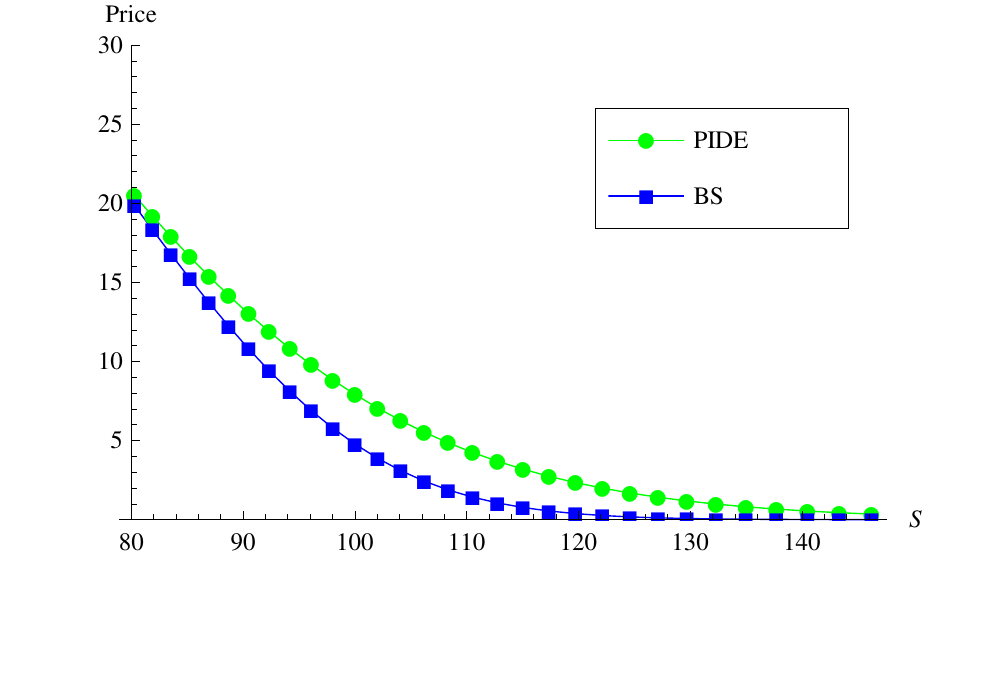}
\ 
\includegraphics[width=7truecm]{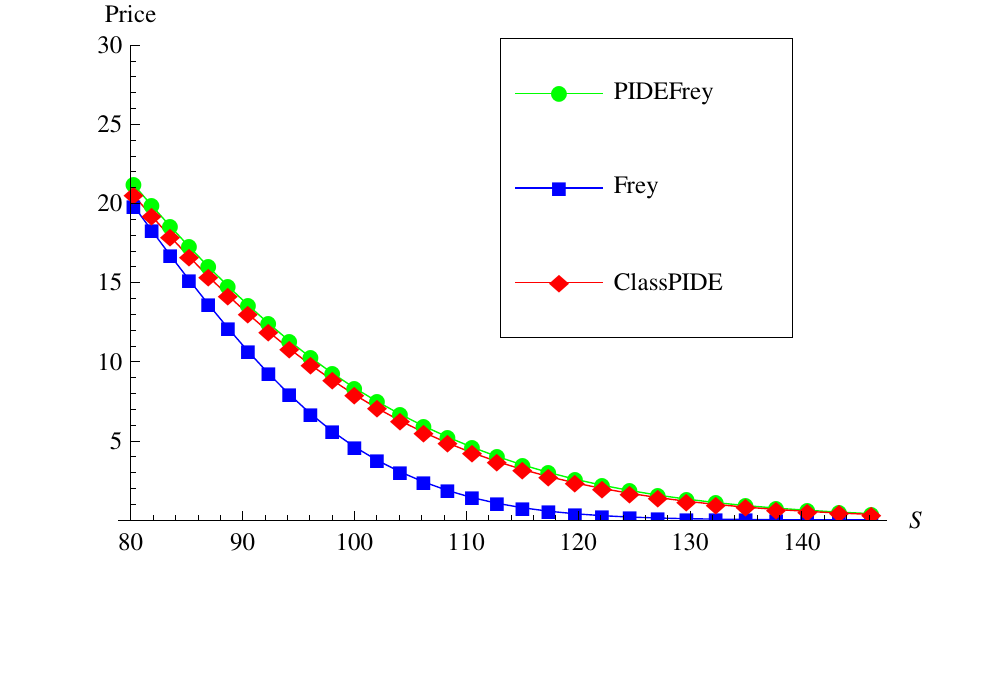}
\caption{Comparison of European put option prices between the classical PIDE and the linear  Black--Scholes model (left). Comparison between the classical PIDE and the Frey--Stremme PIDE  model (right).}
\label{fig-1}
\end{figure}

\begin{figure}
\centering
\includegraphics[width=8truecm]{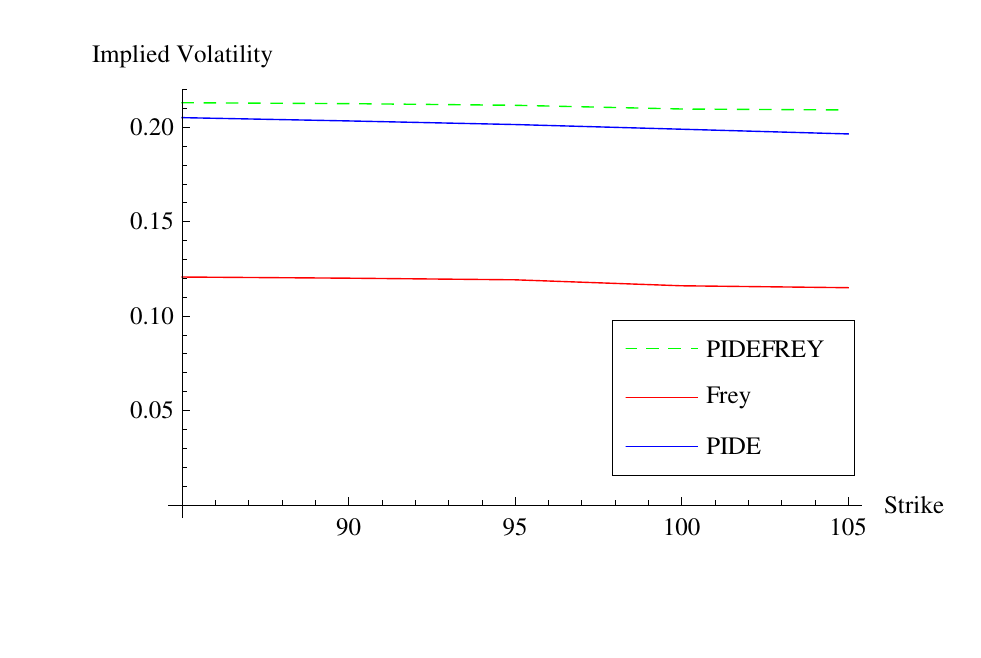}
\caption{Comparison of implied volatilities between the Frey--Stremme model, classical PIDE and Frey--Stremme PIDE  generalization.}
\label{fig-2}
\end{figure}

In Fig. \ref{fig-1} we show comparison of European put option prices between the classical PIDE and the linear  Black--Scholes model, and  comparison between the classical PIDE and the Frey--Stremme PIDE model for the case when the large trader's influence is small,  $\rho=0.001$. In Fig. \ref{fig-2} we plot dependence of the implied volatilities as decreasing functions of the strike price $K$ for the Frey--Stremme model and its PIDE generalizations. We can observe that the implied volatilities for the Frey--Stremme PIDE model is always higher when varying the strike price of the European Put option.

\begin{table}

\centering
\caption{European put option prices $V(0,S)$ for the Black-Scholes and Frey--Stremme models with $\rho=0.001$ and their PIDE generalizations.}
\label{tab1}

\small
\smallskip

\begin{tabular}{l|llll}
     &     B-S         &  F-S                 &  B-S PIDE           &  F-S PIDE  \\
$S$  & $\nu=0, \rho=0$ & $\nu=0, \rho\not=0 $ & $\nu\not=0, \rho=0$ & $\nu\not=0, \rho\not=0$
\\
\hline\hline
61.8783 & 38.1217 & 38.1258 & 38.2297 & 38.8234 \\
67.032 & 32.9691 & 32.9763 & 33.4319 & 34.1889 \\
72.6149 & 27.3972 & 27.4207 & 28.4887 & 29.4425 \\
78.6628 & 21.4275 & 21.5118 & 23.5224 & 24.6911 \\
85.2144 & 15.2547 & 15.4835 & 18.6979 & 20.0701 \\
92.3116 & 9.42895 & 9.85754 & 14.2078 & 15.7321 \\
100. & 4.78444   &  5.32697 & 10.243 & 11.8282 \\
108.329 & 1.88555 & 2.34727 & 6.95353 & 8.48304 \\
117.351 & 0.550422 & 0.814477 & 4.41257 & 5.77178 \\
127.125 & 0.114716 & 0.216426 & 2.60009 & 3.70615 \\
137.713 & 0.016615 & 0.043112 & 1.41444 & 2.2351 \\

\hline
\end{tabular}
\end{table}

\begin{table}
\centering
\caption{European put option prices $V(0,S)$ for the Frey--Stremme and Frey-Stremme PIDE  models for various values of $\rho$.}
\label{tab2}

\small
\smallskip

\begin{tabular}{l|ll|ll|ll}
       &   F-S       & F-S PIDE    &    F-S     &  F-S PIDE   &   F-S       &  F-S PIDE   \\
$S$    &   $\rho=0.1$&  $\rho=0.1$ & $\rho=0.2$ &  $\rho=0.2$ & $\rho=0.3$  &$\rho=0.3$ \\
\hline\hline 
61.8783 & 38.1257 & 38.4958 & 38.1258 & 38.8234 & 38.1373 & 39.2259 \\
67.032 & 32.9759 & 33.7763 & 32.9763 & 34.1889 & 33.019 & 34.6865 \\
72.6149 & 27.4191 & 28.9293 & 27.4207 & 29.4425 & 27.5623 & 30.049 \\
78.6628 & 21.5061 & 24.0698 & 21.5118 & 24.6911 & 21.8893 & 25.4118 \\
85.2144 & 15.4688 & 19.3477 & 15.4835 & 20.0701 & 16.2645 & 20.896 \\
92.3116 & 9.83127 & 14.9344 & 9.85754 & 15.7321 & 11.0916 & 16.6367 \\
100. &   5.29421 & 10.9999 & 5.32697 & 11.8282 & 6.8043 & 12.7672 \\
108.329 & 2.31882 & 7.68096 & 2.34727 & 8.48304 & 3.68338 & 9.4005 \\
117.351 & 0.797286 & 5.05246 & 0.814477 & 5.77178 & 1.72932 & 6.61053 \\
127.125 & 0.209195 & 3.11214 & 0.216426 & 3.70615 & 0.693804 & 4.41995 \\
137.713 & 0.040995 & 1.78547 & 0.043112 & 2.2351 & 0.234949 & 2.79821 \\
\hline
\end{tabular}
\end{table}

Numerical values of option prices for various models and parameter settings  are summarized in Tables~\ref{tab1} and \ref{tab2}. Numerical results confirm our expectation that assuming risk arising from sudden jumps in the underlying asset process yields a higher option price when comparing to the Frey--Stremme model option price.

\begin{figure}
\centering
\includegraphics[width=7truecm]{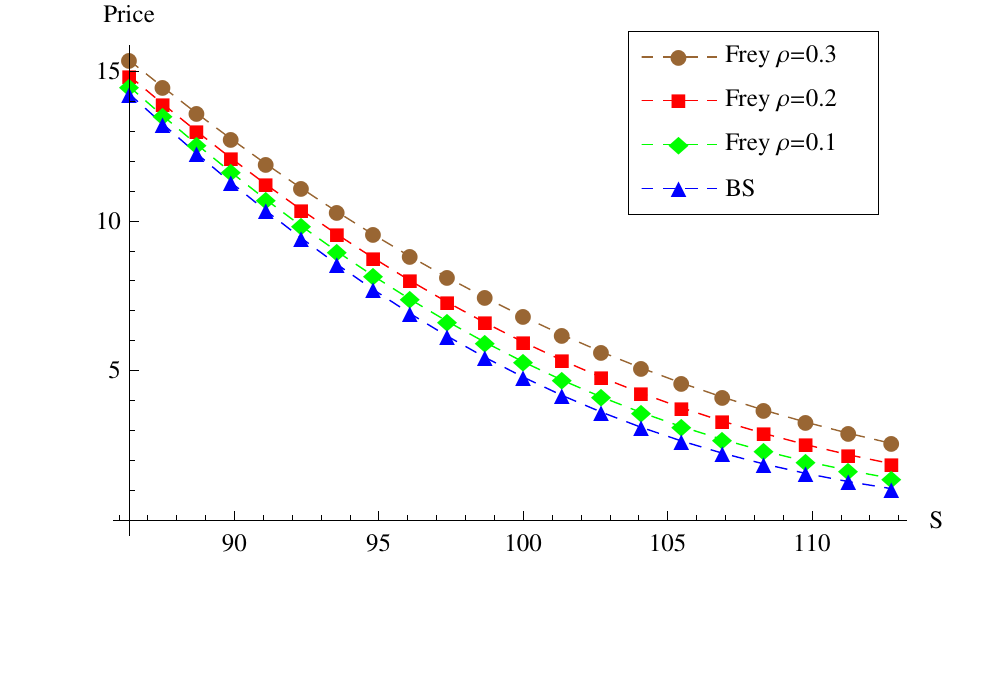}
\ 
\includegraphics[width=7truecm]{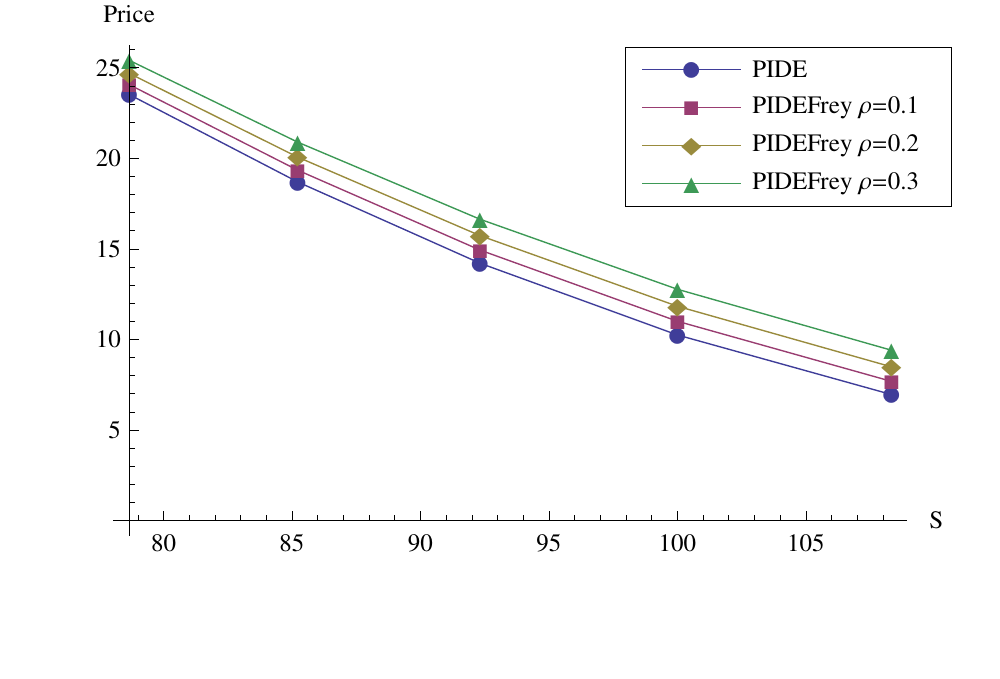}

\caption{Comparison of European put option prices for the Black--Scholes and the Frey--Stremme models (left) and Frey-Stremme PIDE model for various $\rho$.}
\label{fig-3}
\end{figure}

In Fig. \ref{fig-3} (left) we compare European put option prices $V(0,S)$ computed by means of the Black--Scholes and Frey-Stremme models depending on the parameter $\rho$ measuring  influence of a large trader. We can observe that the price of the European put option increases with respect to $\rho$, as expected. Furthermore, the price computed from the Frey--Stremme PIDE model is larger than the one obtained from the linear Black--Scholes equation. Moreover, the price computed from Frey--Stremme PIDE model is higher than the one computed by means of the nonlinear Frey--Stremme model. This is due to the fact that the jump part of the underlying asset process enhances risk, and, consequently increases the option price. Fig. \ref{fig-3} (right) shows comparison of the option prices for the Black--Scholes and Frey--Stremme PIDE model for various values of $\rho$.


\section{Conclusion}

In this paper we investigated a novel nonlinear option pricing model generalizing the Frey--Stremme model under the assumption that the underlying asset price follows a L\'evy stochastic process. We derived the fully-nonlinear PIDE for pricing options under influence of a large trader. We also proposed the hedging strategy minimizing the variance of the tracking error. We derived a semi-implicit finite difference numerical approximation scheme for solving the nonlinear PIDE. We presented various numerical experiments illustrating the influence of the large trader under the L\'evy  process with jumps.

\section{Acknowledgements}

The research was supported by the project CEMAPRE MULTI/00491 financed by FCT/MEC through national funds and the Slovak Research Project VEGA 1/0062/18.

\bibliographystyle{amsplain}

\bibliography{paper}

\end{document}